\documentclass{llncs}

\usepackage[utf8]{inputenc}
\usepackage{amsmath}
\usepackage{amssymb}
\usepackage{graphicx}
\usepackage{paralist}
\usepackage{subcaption}

\newif\ifarxiv
\arxivtrue

\usepackage[english]{babel}

\usepackage[pdfpagelabels,colorlinks,citecolor=blue,linkcolor=blue,urlcolor=blue]{hyperref}

\graphicspath{{images/}}

\def\spread[#1,#2]{#1_1, #1_2, \dots, #1_{#2} } 

\newcommand{\TRI}{\bigtriangleup}
\def\G{\mathcal{G}}
\DeclareMathOperator\dist{\rm dist}


\begin{document}

\title{On the edge-length ratio of $2$-trees\thanks{The research was initiated during workshop Homonolo 2018. Research partially supported by MIUR, the Italian Ministry of Education, University and Research, under Grant 20174LF3T8 AHeAD: efficient Algorithms for HArnessing networked Data.
V.~Bla{\v z}ej acknowledges the support of the OP VVV MEYS funded project CZ.02.1.01/0.0/0.0/16\_019/0000765 ``Research Center for Informatics''.
This work was supported by the Grant Agency of the Czech Technical University in Prague, grant No. SGS20/208/OHK3/3T/18.
The work of J.~Fiala was supported by the grant 19-17314J of the GA \v{C}R.
}}

\author{V{\'a}clav Bla{\v z}ej\inst{1}\orcidID{0000-0001-9165-6280} \and
Ji{\v r}{\'\i} Fiala\inst{2}\orcidID{0000-0002-8108-567X} \and
Giuseppe Liotta\inst{3}\orcidID{0000-0002-2886-9694}}

\authorrunning{\qquad\qquad\qquad\qquad\quad\,\,\, V. Bla{\v z}ej \and J. Fiala \and G. Liotta}

\institute{Faculty of Information Technology, Czech Technical University in Prague, Czech Republic\\
\email{vaclav.blazej@fit.cvut.cz} \and
Faculty of Mathematics and Physics, Charles University, Prague, Czech Republic\\
\email{fiala@kam.mff.cuni.cz} \and
Dipartimento di Ingegneria, Universit\`a degli Studi di Perugia, Italy\\
\email{giuseppe.liotta@unipg.it}}

\maketitle

\begin{abstract}
We study planar straight-line drawings of graphs that minimize the ratio between the length of the longest and the shortest edge. We answer a question of Lazard et al. [Theor. Comput. Sci. {\bf 770} (2019), 88--94] and, for any given constant $r$, we provide a $2$-tree which does not admit a planar straight-line drawing with a ratio bounded by $r$. When the ratio is restricted to adjacent edges only, we prove that any $2$-tree admits a planar straight-line drawing whose edge-length ratio is at most $4 + \varepsilon$ for any arbitrarily small $\varepsilon > 0$, hence the upper bound on the local edge-length ratio of partial $2$-trees is $4$.

\keywords{Planar straight-line drawing \and Edge-length ratio \and $2$-tree}
\end{abstract}

\section{Introduction}\label{se:intro}
Straight-line drawings of planar graphs are thoroughly studied both for their theoretical interest and their applications in a variety of disciplines (see, e.g.,~\cite{DBLP:books/ph/BattistaETT99,DBLP:reference/crc/2013gd}). Different quality measures for planar straight-line drawings have been considered in the literature, including area, angular resolution, slope number, average edge length, and total edge length (see, e.g.,~\cite{DBLP:reference/crc/GiacomoLT17,HoffmannKKR14,DBLP:books/ws/NishizekiR04}).

This paper studies the problem of computing planar straight-line drawings of graphs where the length ratio of the longest to the shortest edge is as small as possible.
We recall that the problem of deciding whether a graph admits a planar straight-line drawing with specified edge lengths is NP-complete even when restricted to $3$-connected planar graphs~\cite{EadesW90} and the completeness persists in the case when all given lengths are equal~\cite{CabelloDR07}. In addition, deciding whether a degree-4 tree has a planar drawing such that all edges have the same length and the vertices are at integer grid points is NP-complete~\cite{DBLP:journals/ipl/BhattC87}.

In the attempt of relaxing the edge length conditions which make the problem hard, Hoffmann et al.~\cite{HoffmannKKR14} propose to minimize the ratio between the longest and the shortest edges among all straight-line drawings of a graph. While the problem remains hard for general graphs (through approximation of unit disk graphs~\cite{ChenJKXZ11}), Lazard et al. prove~\cite{LazardLL19} that any outerplanar graph admits a planar straight-line drawing such that the length ratio of the longest to the shortest edges is strictly less than $2$. This result is tight in the sense that for any $\varepsilon >0$ there are outerplanar graphs that cannot be drawn with an edge-length ratio smaller than $2-\varepsilon$. Lazard et al. also ask whether their construction could be extended to the class of series-parallel graphs.

We answer this question in the negative sense, by showing that a subclass of series-parallel graphs, called $2$-trees, does not allow any planar straight-line drawing of bounded edge-length ratio.
In fact, a corollary of our main result is the existence of an $\Omega(\log n)$ lower bound for the edge-length ratio of planar straight-line drawings of $n$-vertex $2$-trees.
Motivated by this negative result, we consider a local measure of edge-length ratio and prove that when the ratio is restricted only to the adjacent edges, any series-parallel graph admits a planar straight-line drawing with local edge-length ratio at most $4 + \varepsilon$, for any arbitrarily small $\varepsilon >0$. The proof of this upper bound is constructive, and it gives rise to a linear-time algorithm assuming a real RAM model of computation.

It is worth noticing that Borrazzo and Frati have shown that any $2$-tree on $n$ vertices can be drawn with edge-length ratio $O(n^{0.695})$~\cite{BorrazzoF20}. This, together with our $\Omega(\log n)$ result, defines a non-trivial gap between the upper and lower bound on the edge-length ratio of planar straight-line drawings of partial $2$-trees. We recall that Borrazzo and Frati also show an $\Omega(n)$ lower bound on the edge-length ratio of general planar graphs~\cite{BorrazzoF20}.

The rest of the paper is organized as follows. Preliminaries are in Section~\ref{se:preli}; the $\Omega(\log n)$ lower bound is proved in Section~\ref{se:lower-bound}; Section~\ref{se:local} presents a constructive argument for an upper bound on the local edge-length ratio of partial $2$-trees. Conclusions and open problems can be found in Section~\ref{se:open}.
\ifarxiv
Omitted proof can be found in the appendix.
\else
Omitted proof can be found in the appendix of the full version of the paper which is available online~\cite{DBLP:journals/corr/abs-1909-11152}.
\fi

\section{Preliminaries}\label{se:preli}

We use capital letters $A,B,\dots$, for the points in the Euclidean plane.
For points $A$ and $B$, let $|AB|$ denote the Euclidean distance between $A$ and $B$.
The symbol $\TRI ABC$ denotes the triangle determined by three distinct non-colinear points $A$, $B$, and $C$. The symbol $\angle BAC$ stands for the angle at vertex A of the triangle $\TRI ABC$.

For a polygon $Q$, we denote its perimeter by $P(Q)$ and its area by $A(Q)$.

We consider finite nonempty planar graphs and their planar straight-line drawings.
Once a straight-line drawing of a graph $G$ is given,
with a slight abuse of notation we use the same symbol for a vertex $U$ and the point $U$ representing the vertex $U$ in the drawing; the same symbol $UV$ for an edge and the corresponding segment;
as well as $\TRI UVW$ for an induced cycle of length three and the corresponding triangle.

When we consider graphs as combinatorial objects, we often use lowercase symbols $u$ or $e$ for the vertices and edges.


The \emph{edge-length ratio} of a planar straight-line drawing of a graph $G$
is the ratio between the length of the longest and the shortest edge of the drawing.

\begin{definition}
The \emph{edge-length ratio} $\rho(G)$ of a planar graph $G$
is the infimum edge-length ratio taken over all planar straight-line drawings of $G$.
\end{definition}

The class of $2$-trees is defined recursively: an edge is a $2$-tree. If $e$ is an edge of a $2$-tree, then the graph, formed by adding a new vertex $u$ adjacent to both endpoints of $e$, is also a $2$-tree. In such a situation we say that $u$ has been added as a \emph{simplicial} vertex to $e$. A \emph{partial $2$-tree} is a subgraph of a $2$-tree.

\section{Edge-length ratio of $2$-trees}\label{se:lower-bound}

We recall that $2$-trees are planar graphs. The main result of this section is the following.

\begin{theorem}\label{thm:2trees-unbounded}
    For any $r \ge 1$, there exists a $2$-tree $G$ with edge-length ratio $\rho(G) \ge r$.
\end{theorem}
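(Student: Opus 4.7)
My plan is to construct, by induction on a depth parameter $k$, a $2$-tree $G_k$ for which every planar straight-line drawing contains a chain $T_0 \supset T_1 \supset \dots \supset T_k$ of geometrically nested triangles whose perimeters decrease by a constant factor at each level. Once such a chain is forced, the ratio between the length of some edge of $T_0$ and of some edge of $T_k$ is at least $c^{-k}$ for an absolute $c \in (0,1)$, so for any given $r$ it suffices to choose $k$ with $c^{-k} \ge r$. The base case $G_0$ is a single triangle $\TRI A_0 B_0 C_0$. For the inductive step, to each triangle introduced at the $k$-th level I attach a \emph{gadget} consisting of many simplicial vertices added to each of its three edges together with further $2$-tree structure hanging from them; the gadget is engineered so that planarity forces part of it to be drawn inside the parent triangle, furnishing the next element of the nested chain.

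The central technical statement is a \emph{shrinking lemma}: for an appropriately designed gadget attached to a triangle $T$ of the drawing, at least one of the newly added vertices is forced to lie strictly inside $T$, and the triangle it forms with two vertices of $T$ has perimeter at most $c \cdot P(T)$ for an absolute constant $c \in (0,1)$. Granting this lemma, an induction on $k$ yields nested triangles $T_0 \supset T_1 \supset \dots \supset T_k$ with $P(T_{i+1}) \le c\, P(T_i)$, so $G_k$ has an edge of length at most $P(T_k)/3 \le (c^k/3)\, P(T_0)$ and an edge of length at least $P(T_0)/3$, yielding $\rho(G_k) \ge c^{-k}$, which surpasses any prescribed $r$ for $k$ large enough.

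The main obstacle is the shrinking lemma, and specifically the step forcing a gadget vertex inside $T$. A single simplicial vertex added to an edge of $T$ can always be drawn on either side of that edge, so the containment must be a consequence of cumulative planarity constraints among many added vertices and their attached sub-structures. One workable route is to pack enough structure into the gadget that an all-exterior drawing would require the exterior regions of $T$ to support more angular or area capacity than they have, leading to a contradiction and forcing at least one branch inside $T$. Once containment is established, the shrinkage by a constant factor follows from a pigeonhole argument on the angles at the vertices of $T$, combined with the elementary observation that for any point $X$ in a triangle $T$ some vertex of $T$ is within half of the longest side of $T$. Calibrating the gadget so that both containment and shrinkage hold with a constant $c$ independent of the ambient scale of $T$ is the delicate technical heart of the construction.
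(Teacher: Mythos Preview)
Your proposal has the right overall architecture---force a nested chain of triangles whose size decreases---but the shrinking lemma as you state it does not hold, and the sketch you give does not repair it. You claim that once a gadget vertex $X$ is forced inside $T$, the triangle it forms with the two endpoints of its parent edge has perimeter at most $c\cdot P(T)$ for an absolute $c<1$. But if $X$ is simplicial to edge $BC$ of $T=\TRI ABC$, the new triangle $\TRI BCX$ contains the side $BC$, so $P(\TRI BCX)\ge 2|BC|$. When $T$ is thin (say $|AB|=|AC|=1$ and $|BC|=2-\epsilon$), this is already essentially all of $P(T)$, and no uniform $c<1$ exists. Your auxiliary ``elementary observation'' that every interior point is within half the longest side of some vertex is also false (the centroid of an equilateral triangle of side $1$ is at distance $1/\sqrt{3}>1/2$ from every vertex), and in any case it would not rescue the argument, since you have no control over which two vertices of $T$ bound the new triangle.

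The paper's proof shows how to get around exactly this obstruction. A multiplicative decrease \emph{is} obtainable, but for \emph{area}, not perimeter: by attaching several simplicial vertices to an edge one guarantees a separating triangle with two further simplicial vertices strictly inside it, and then the two sub-triangles built on the two interior edges are disjoint, so one of them has at most half the area. Iterating this halves the area at every level and eventually produces a \emph{thin} triangle (area $\le\frac14$ with all sides $\ge 1$). Only then does a perimeter argument work, and it is \emph{additive}, not multiplicative: in a thin triangle one can always cut off an isosceles corner of side $1$ and trap the next separating triangle in the remaining polygon, dropping the perimeter by at least $1$ every two levels. Your one-phase multiplicative-perimeter plan collapses precisely on thin triangles; the fix is this two-phase argument. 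Separately, your containment step (``exterior regions lack angular or area capacity'') is left entirely heuristic; the paper's device---add five simplicial vertices so that three lie on the same side, making the outermost one a separating triangle with two others inside---is what makes this step clean.
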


To prove Theorem~\ref{thm:2trees-unbounded}, for a given $r$ we argue that a sufficiently large $2$-tree, drawn with the longest edge having length $r$, contains a triangle with area at most $\frac14$ (Corollary~\ref{cor:area}).
Then, inside this triangle of small area we build a sequence of triangles with perimeters decreasing by at least $1$ at every two steps (Lemmas~\ref{lem:case_analysis_1} and \ref{lem:case_analysis_2}), which results in a triangle with an edge of length less than 1.

We consider a special subclass $\G=\{G_0,G_1,\dots\}$ of $2$-trees with labeled vertices and edges constructed as follows: $G_0$ is the complete graph $K_3$ whose vertices and edges are given the label $0$.
The graph $G_{i+1}$ is obtained by adding five simplicial vertices to each edge of label $i$ of $G_i$. Each newly created vertex and edge gets label $i+1$. See Fig.~\ref{fig:iteration} for an example where the black vertices and edges have label $0$, the blue ones have label $1$, and the red ones have label $2$.

\begin{figure}[h]
    \centering
    \includegraphics[width=0.9\textwidth,page=2]{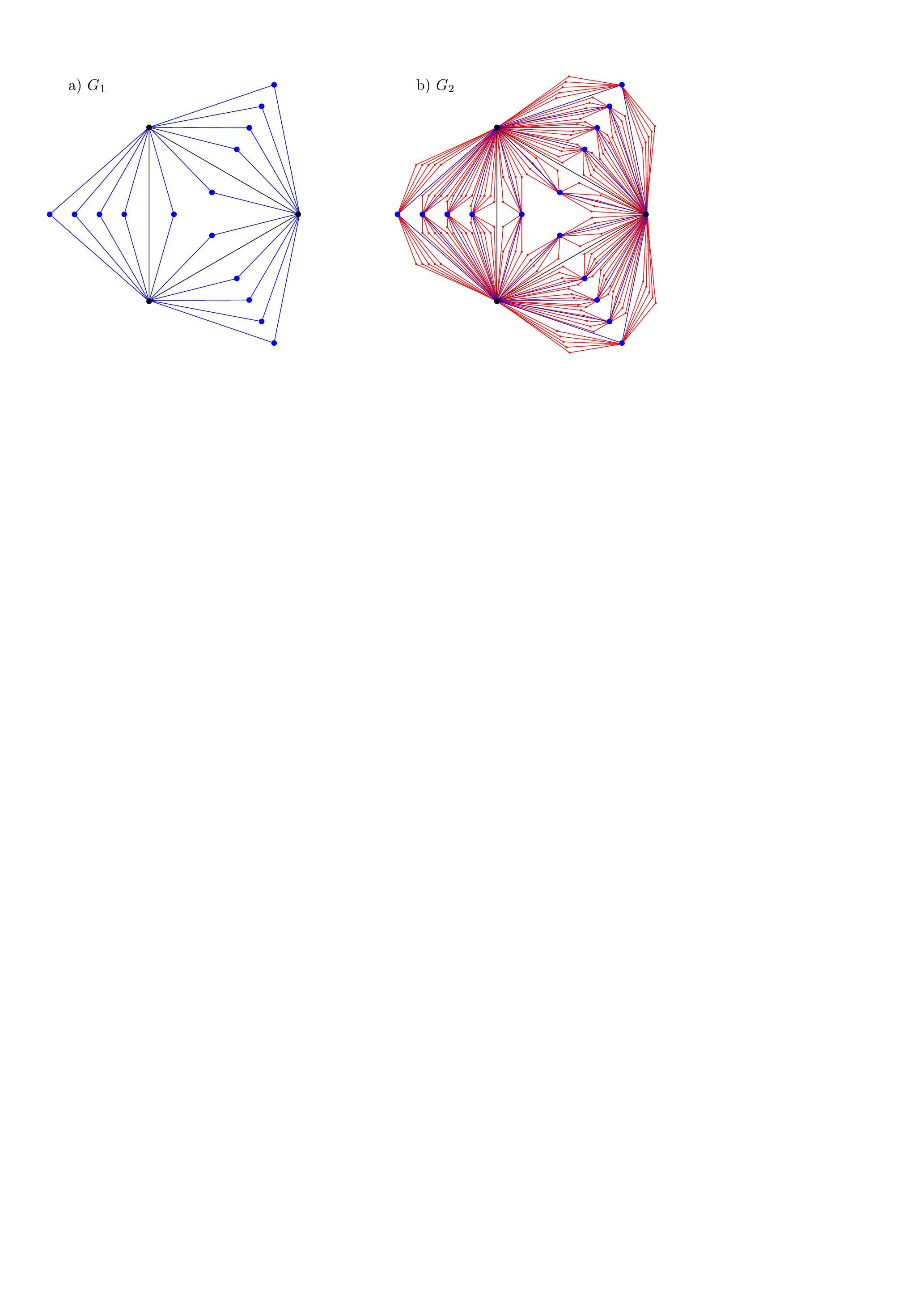}
    \caption{The $2$-trees $G_1$ and $G_2$. Black color corresponds to label $0$, blue to $1$, and red to $2$. Separating triangle $\Delta_1$ is emphasized by a dashed line in $G_1$.}%
    \label{fig:iteration}
\end{figure}

A \emph{separating triangle of level $i$} in a straight-line drawing of a $2$-tree $G$ is an unordered triple $\{U,V,W\}$ of mutually adjacent vertices such that the vertex $W$ of label $i$ was added as a simplicial vertex to the edge $UV$ in the recursive construction of $G$ and the triangle $\TRI UVW$
contains in its interior at least two other vertices with label $i$ which are simplicial to the edge $UV$.
For example, in Fig.~\ref{fig:iteration}~a) vertices $\{U,V,W\}$ form a separating triangle of level $1$.

\begin{lemma}\label{lem:separating-triangle}
    For any $k> i\ge 1$, for any planar straight-line drawing of the graph $G_k$, and for any edge $e$ of $G_k$ labeled by $i$, there exists a separating triangle of level $i+1$ containing the endpoints of $e$.

\end{lemma}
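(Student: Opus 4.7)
The plan is to exploit the five simplicial vertices of label $i+1$ that the construction attaches to $e$ in passing from $G_i$ to $G_{i+1}$, and to identify one of them as the apex of the required separating triangle. I fix an edge $e = UV$ of label $i$ and call $W_1, \dots, W_5$ these five vertices; since $k > i$, they and the edges $UW_j, VW_j$ all appear in $G_k$ and hence in the given drawing. The planar straight-line drawing property rules out any $W_j$ lying on the line through $U$ and $V$: between $U$ and $V$ the vertex $W_j$ would lie on the edge $UV$, while outside the segment $UV$ one of the edges $UW_j$ or $VW_j$ would overlap $UV$. Thus every $W_j$ lies strictly in one of the two open half-planes determined by the line $UV$, and a pigeonhole argument yields three of them, say $W_a, W_b, W_c$, in the same half-plane.

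The core geometric step, which I expect to be the main obstacle, is the following nesting claim: three triangles that share side $UV$ and lie on the same side of line $UV$ must form a chain by inclusion in any planar drawing. I would prove it pairwise: if two such triangles $\TRI UVX$ and $\TRI UVY$ were neither nested nor equal, their convex interiors would properly overlap while sharing only $UV$ on their boundaries, so the boundaries would have to cross away from $UV$; such a crossing cannot occur between two edges both incident to $U$ (they meet only at $U$) nor between two edges both incident to $V$, so it must involve the pair $UX, VY$ or the pair $UY, VX$, contradicting planarity. Applying this pairwise to $W_a, W_b, W_c$ linearly orders the three triangles by inclusion.

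Finally I relabel so that $W_a$ is the apex of the outermost of the three triangles. The remaining two vertices $W_b, W_c$ then lie strictly in the interior of $\TRI UVW_a$: they differ from $U, V, W_a$, and planarity forbids them to lie on the sides $UV, UW_a, VW_a$, each of which is a drawn edge. Since $W_a, W_b, W_c$ are simplicial to $UV$ and carry label $i+1$, the triple $\{U,V,W_a\}$ satisfies the definition of a separating triangle of level $i+1$ and, having $U$ and $V$ among its vertices, contains the endpoints of $e$, as required.
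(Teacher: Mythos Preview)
Your proof is correct and follows essentially the same approach as the paper. The paper phrases the pigeonhole step as ``at least three of the five triangles traverse $e$ in the same direction when following their boundaries clockwise,'' which is equivalent to your ``at least three apices lie on the same side of line $UV$''; from there both arguments conclude that the three triangles are nested and pick the outermost one. Your write-up is more explicit about why the apices cannot lie on line $UV$, why non-nested same-side triangles would force an edge crossing, and why the two inner apices lie strictly in the interior, but the underlying idea is identical.
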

\begin{proof}
    If a common edge of two triangles is traversed in the same direction when following their boundaries in the clockwise manner, then these triangles are nested, i.e. the interior of one contains the other one.
    Since we have five vertices simplicial to $e$, out of the corresponding five triangles in at least three
    $e$ traversed in the same direction when following their boundaries in the clockwise manner.
    Thus at least three triangles are nested and the outermost of these is the desired separating triangle.
\end{proof}

(For the clarity of presentation we have assumed a straight-line drawing, where the graph-theoretic
term triangle coincides with the geometric one. This assumption could indeed be neglected when we consider a triangle in a planar drawing as the Jordan curve formed from the drawing of a 3-cycle.)

We proceed to show that any drawing of $G_k$ contains a triangle of sufficiently small area. To this aim, we construct a sequence of nested triangles such that
each triangle's area is half of the previous triangle's area. We denote as $\Delta_i$ a separating triangle of level $i$ in an embedding of $G_k$, where $i \leq k$.

\begin{lemma}\label{lem:area}
For any $k\ge 1$, any planar straight-line drawing of $G_k$ contains a
sequence of triangles $\spread[\Delta,k]$, where for any $i\in\{1,\dots,k\}$
the triangle $\Delta_{i}$ is a separating triangle of level $i$,
and for each $i>1$, in addition, $\Delta_{i}$ is
in the interior of $\Delta_{i-1}$ and $A(\Delta_{i}) \leq \frac12 A(\Delta_{i-1})$.
\end{lemma}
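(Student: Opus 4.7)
The proof will go by induction on $i$. For $i=1$, the pigeonhole-and-nesting argument used in Lemma~\ref{lem:separating-triangle}, applied to any edge of $G_k$ of label $0$, yields a separating triangle of level $1$ which I take as $\Delta_1$. For the inductive step, I assume $\Delta_{i-1} = \TRI UVW$ is a separating triangle of level $i-1$, with $W$ the label-$(i-1)$ apex placed on the label-$(i-2)$ base $UV$. By the separating property, the interior of $\TRI UVW$ contains at least two further label-$(i-1)$ simplicial vertices $W'$, $W''$ of $UV$. The planar nesting argument then applies to the triangles $\TRI UVW$, $\TRI UVW'$, $\TRI UVW''$, which share the base $UV$ and have apexes on the same side; these three triangles are totally nested and I relabel so that $\TRI UVW \supset \TRI UVW' \supset \TRI UVW''$. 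The three triangles partition $\Delta_{i-1}$ into the innermost triangle $\TRI UVW''$, an inner quadrilateral $UW'VW''$, and an outer quadrilateral $UWVW'$.

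I then split into two cases according to whether $A(\TRI UVW') \leq \tfrac{1}{2} A(\Delta_{i-1})$ or not. The \emph{easy case} is the former: I apply Lemma~\ref{lem:separating-triangle} to the label-$(i-1)$ edge $UW''$, obtaining a separating triangle $\Delta_i = \TRI UW''X$ of level $i$. A planarity argument pins $X$ inside $\TRI UVW'$: the segment $W''X$ starts at the interior point $W''$ of $\TRI UVW'$, so if $X$ lay outside $\TRI UVW'$ then $W''X$ would have to cross one of the $G_k$-edges $UV$, $UW'$, $VW'$ that bound $\TRI UVW'$. Hence $\Delta_i \subseteq \TRI UVW'$ and $A(\Delta_i) \leq A(\TRI UVW') \leq \tfrac{1}{2} A(\Delta_{i-1})$, as required.

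In the \emph{hard case} $A(\TRI UVW') > \tfrac{1}{2} A(\Delta_{i-1})$, the outer quadrilateral has area strictly less than $\tfrac{1}{2} A(\Delta_{i-1})$, so I search for $\Delta_i$ inside it. Applying Lemma~\ref{lem:separating-triangle} to the label-$(i-1)$ edge $UW'$ and studying the apex $Y$ of the resulting separating triangle, planarity confines $Y$ either to the inner quadrilateral (if $Y$ lies on the $V$-side of the line $UW'$) or to the sub-triangle $\TRI UWP$ of the outer quadrilateral, where $P$ is the intersection of line $UW'$ with segment $WV$ (if $Y$ lies on the $W$-side). When $Y$ is on the $W$-side, $\TRI UW'Y \subseteq \TRI UWP$ and its area is at most $A(\TRI UW'W)$; the standard decomposition $A(\TRI UVW') + A(\TRI UW'W) + A(\TRI VW'W) = A(\Delta_{i-1})$ obtained from the three sub-triangles cut out of $\TRI UVW$ by the segments $UW'$, $VW'$, $WW'$ then yields $A(\TRI UW'W) < \tfrac{1}{2} A(\Delta_{i-1})$, and $\Delta_i = \TRI UW'Y$ works.

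The main obstacle will be the remaining sub-case of the hard case, in which the pigeonhole over the five label-$i$ simplicial vertices of $UW'$ forces $Y$ onto the $V$-side. I plan to handle it by running the symmetric argument on the edge $VW'$, and failing that, on one of the boundary edges $UW$ or $VW$, whose label-$i$ simplicial vertices on the interior side of $\Delta_{i-1}$ are forced by the same planarity considerations to lie in the outer quadrilateral; a combined counting argument across these four candidate label-$(i-1)$ edges should rule out the possibility that every resulting separating triangle of level $i$ escapes the outer quadrilateral, so at least one of them supplies the required $\Delta_i$.
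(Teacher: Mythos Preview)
Your induction skeleton and base case match the paper's, but the inductive step is overcomplicated and has a genuine gap. In the ``remaining sub-case'' of the hard case you only sketch a plan: try $VW'$, and failing that fall back on $UW$ or $VW$. The fallback is unsafe. Both endpoints of $UW$ (and of $VW$) lie on the boundary of $\Delta_{i-1}$, so nothing forces the five label-$i$ simplicial vertices of $UW$ to land inside $\Delta_{i-1}$; all of them may lie outside, and then Lemma~\ref{lem:separating-triangle} gives you a separating triangle of level $i$ that is useless for the nesting claim. The ``combined counting argument'' you allude to would therefore have to work with only the two edges $UW'$, $VW'$, and you have not shown how.

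The paper sidesteps all of this with a one-line idea that you were close to. Pick \emph{any} interior label-$(i-1)$ vertex, say your $W'$, and apply Lemma~\ref{lem:separating-triangle} to \emph{both} of its label-$(i-1)$ edges $UW'$ and $VW'$. Each resulting separating triangle of level $i$ lies inside $\Delta_{i-1}$ (since $W'$ is interior, exactly the planarity argument you used in your easy case). The two triangles share only the vertex $W'$ and are bounded by edges of $G_k$, so planarity forces their interiors to be disjoint; hence the smaller one has area at most $\tfrac12 A(\Delta_{i-1})$. No case split on $A(\TRI UVW')$, no second interior vertex $W''$, and no appeal to the boundary edges $UW$, $VW$ is needed.
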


\begin{proof}
We prove the lemma by induction on $i$.
For $i=1$ we apply Lemma~\ref{lem:separating-triangle} on any edge $e$ of label 0 in $G_k$
to get the triangle $\Delta_1$.

When $i\in \{2,\dots,k\}$, we assume
by inductive hypothesis that the graph $G_k$ contains a sequence of triangles $\spread[\Delta,i-1]$
satisfying the constraints.
Let $U$ be one of the two vertices of label $i-1$ in the interior of $\Delta_{i-1}$
and let $e$ and $f$ be the two edges of label $i-1$ incident to $U$.

We apply Lemma~\ref{lem:separating-triangle} on both of $e$ and $f$ to obtain two
separating triangles of level $i$ inside $\Delta_{i-1}$, see Fig.~\ref{fig:area_lemma}.
Since the drawing was planar, the two triangles are non-overlapping.
We choose the triangle with the smaller area to be $\Delta_i$ to assure that
$A(\Delta_i) \leq \frac12 A(\Delta_{i-1})$.

\begin{figure}
        \centering
        \includegraphics[width=0.9\textwidth]{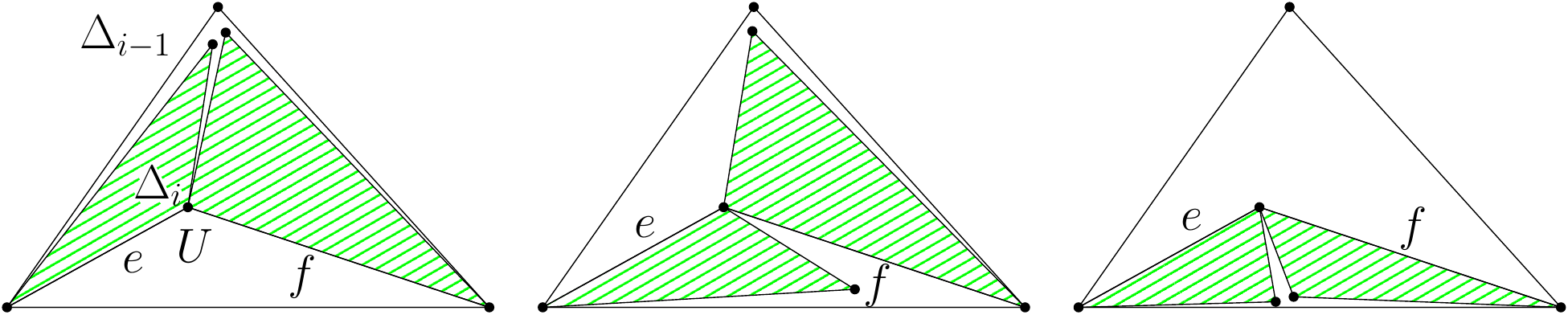}
        \caption{Two separating triangles created in the interior of $\Delta_{i-1}$}%
        \label{fig:area_lemma}
\end{figure}
\end{proof}

\begin{corollary}\label{cor:area}
For any $r > 1$
and $k \ge 2+ 2 \log_2 r$,
every planar straight-line drawing of $G_k$
with edge lengths at most $r$
contains a separating triangle of area at most $\frac14$.
\end{corollary}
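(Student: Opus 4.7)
The plan is to apply Lemma~\ref{lem:area} to the given planar straight-line drawing of $G_k$, which produces a nested sequence of separating triangles $\Delta_1,\dots,\Delta_k$ with $A(\Delta_i)\le \tfrac12 A(\Delta_{i-1})$ for every $i>1$. Iterating this halving property $k-1$ times yields the compact bound
\[
A(\Delta_k)\le \frac{A(\Delta_1)}{2^{k-1}}.
\]

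Next, I would estimate $A(\Delta_1)$ from the edge-length hypothesis. Since every edge of the drawing has length at most $r$, the three sides of $\Delta_1$ in particular all have length at most $r$. The elementary identity $A(\TRI ABC)=\tfrac12|AB|\cdot|AC|\cdot\sin\angle BAC$ with $|AB|,|AC|\le r$ and $\sin\angle BAC\le 1$ therefore gives $A(\Delta_1)\le r^2/2$. Substituting into the previous display yields $A(\Delta_k)\le r^2/2^k$.

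It then remains to check that the hypothesis $k\ge 2+2\log_2 r$ is exactly the condition needed to conclude: it is equivalent to $2^k\ge 4r^2$, so $A(\Delta_k)\le 1/4$, and $\Delta_k$ itself is the desired separating triangle. The argument amounts to pasting two inequalities together, so there is no real obstacle; the only care required is in the exponent bookkeeping, namely that the halving from Lemma~\ref{lem:area} compounds $k-1$ times (not $k$), which together with the factor $\tfrac12$ from $A(\Delta_1)\le r^2/2$ explains why the additive constant in the hypothesis is exactly $2$.
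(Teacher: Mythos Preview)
Your argument is correct and follows essentially the same route as the paper: apply Lemma~\ref{lem:area} to get $A(\Delta_k)\le A(\Delta_1)/2^{k-1}$, bound $A(\Delta_1)$ using the edge-length hypothesis, and then verify that $k\ge 2+2\log_2 r$ makes $A(\Delta_k)\le\tfrac14$. The only cosmetic difference is that the paper uses the slightly sharper estimate $A(\Delta_1)\le \tfrac{\sqrt{3}}{4}r^2$ (the equilateral extremal), whereas you use $A(\Delta_1)\le r^2/2$; both suffice for the stated hypothesis on $k$.
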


\begin{proof}
If all edges have length at most $r$, the area of $\Delta_1$ is bounded by $\frac{\sqrt{3}}{4}r^2$.
By Lemma~\ref{lem:area}, any drawing of $G_k$ contains a sequence of nested separating triangles whose last element $\Delta_k$ has area at most $\frac{1}{2^{k-1}} \frac{\sqrt{3} r^2}{4}\le \frac14$.
\end{proof}

Before we proceed to the next step in our construction, we need some elementary facts from the trigonometry.

We call \emph{thin} any triangle with edges of length at least $1$ and area at most $\frac14$.
Any thin triangle has height at most $\frac12$ and hence it has one obtuse angle of size at least $\frac{2\pi}3$ and two acute angles, each of size at most $\frac{\pi}6$.

\begin{lemma}\label{lem:angles}
Let $\TRI ABC$ be a thin triangle, where the longest edge is $AB$ and let $D\in \TRI ABC$
be such that $|CD|\ge 1$. Then one of the angles $\angle ACD$ or $\angle BCD$ is obtuse.
\end{lemma}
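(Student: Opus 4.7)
The plan is to argue by contradiction: assume both $\angle ACD$ and $\angle BCD$ are at most $\pi/2$ and show that then $|CD| < 1$, contradicting the hypothesis $|CD|\ge 1$. The key observation is that the two thinness hypotheses combine multiplicatively: the area bound $A(\TRI ABC)\le \tfrac14$ together with $|AB|\ge 1$ forces the altitude $h$ from $C$ to $AB$ to be at most $\tfrac12$, while $\angle A,\angle B\le\pi/6$ forces $1/\cos(\angle A)\le 2/\sqrt 3$. The product of these two factors will turn out to be strictly less than $1$.

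Concretely, I would place the foot $H$ of the altitude from $C$ at the origin with $C=(0,h)$, $A=(-a,0)$, $B=(b,0)$, and $a,b>0$ (the latter because $\angle A,\angle B<\pi/2$). Writing $D=(x,y)$, the two angle assumptions translate via the dot products $(D-C)\cdot(A-C)\ge 0$ and $(D-C)\cdot(B-C)\ge 0$ into the linear inequalities $ax+hy\le h^2$ and $-bx+hy\le h^2$. Since $D$ lies in the triangle we have $0\le y\le h$, so $h-y\ge 0$; assuming WLOG $a\le b$, both inequalities collapse to $|x|\le h(h-y)/a$.

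Plugging this into $|CD|^2=x^2+(h-y)^2$ and using $a^2+h^2=|CA|^2$ produces
\[
|CD| \;\le\; (h-y)\cdot\frac{|CA|}{a} \;=\; \frac{h-y}{\cos(\angle A)} \;\le\; \frac{h}{\cos(\angle A)} \;\le\; \frac{1}{2}\cdot\frac{2}{\sqrt 3} \;=\; \frac{1}{\sqrt 3} \;<\; 1,
\]
the desired contradiction. The only mildly subtle step is choosing the coordinate chart that reveals the relation $|CA|/a=1/\cos(\angle A)$ in the right triangle $CHA$ and recognizing that the hypotheses interact through the single quantity $h/\cos(\angle A)$; after that, the proof reduces to a one-line dot-product calculation.
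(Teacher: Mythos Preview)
Your argument is correct, but it takes a different route from the paper's. The paper argues directly via areas: since $\angle ACB\ge \tfrac{2\pi}{3}$, at least one of $\angle ACD,\angle BCD$ is $\ge\tfrac{\pi}{3}$; if that angle were also acute, then the sub-triangle on that side (say $\TRI ACD$) would have area $\tfrac12|CA||CD|\sin(\angle ACD)\ge \tfrac12\cdot 1\cdot 1\cdot \tfrac{\sqrt 3}{2}=\tfrac{\sqrt 3}{4}>\tfrac14$, contradicting the area bound on $\TRI ABC$. This is a two-line synthetic argument that never introduces coordinates.

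Your approach instead fixes a coordinate frame, translates the acuteness assumptions into linear constraints on $D$, and extracts the explicit bound $|CD|\le h/\cos(\angle A)\le \tfrac{1}{\sqrt 3}$. This is a bit heavier machinery for the same conclusion, but it has the merit of yielding a quantitative bound on $|CD|$ rather than just a contradiction. One small remark: your ``WLOG $a\le b$'' is really choosing the \emph{larger} of the two base angles (since $\tan\angle A=h/a$), but that is harmless because both base angles are at most $\pi/6$.
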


\begin{proof}
Assume by contradiction that both $\angle ACD$ and $\angle BCD$ are acute.
Without loss of generality we may also assume that $\angle ACD \ge \angle BCD$.
Since $\angle ACD + \angle BCD = \angle ACB \ge \frac{2\pi}3$, it follows that $\angle ACD \ge \frac{\pi}3$.

\begin{figure}
        \centering
        \includegraphics[width=0.8\textwidth]{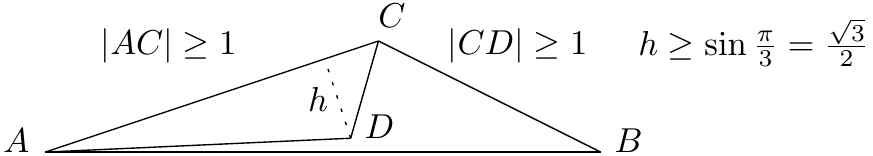}
        \caption{To the argument that $\angle ACD$ cannot be acute.}%
        \label{fig:angles}
\end{figure}

Then the triangle $\TRI ACD$ has height at least $\frac{\sqrt{3}}2$, see Fig.~\ref{fig:angles}.
Thus it has area at least $\frac{\sqrt{3}}4$, a contradiction with the fact that the surrounding thin triangle $\TRI ABC$ has area at most $\frac 14$.
\end{proof}

Now we focus our attention on the perimeters of the considered triangles.

\begin{lemma}\label{lem:cutting}
Let $\TRI ABC$ be a thin triangle, where the longest edge is $AB$.
Denote by $Q$ the polygon, created by cutting off an isosceles triangle $\TRI BDE$ with both edges $BD$ and $BE$ of length $1$. Then the perimeter of any triangle located in the polygon $Q$ is at most $P(\TRI ABC)-1$.
\end{lemma}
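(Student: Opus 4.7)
The plan is to reduce the problem to computing the perimeter of $Q$ itself, using the following elementary monotonicity principle: if a convex region $K$ is contained in a convex region $K'$, then $P(K) \le P(K')$. Applied here, since $\TRI ABC$ is convex and we remove a corner to obtain $Q$, the polygon $Q$ is convex, so every triangle lying inside $Q$ has perimeter at most $P(Q)$. It therefore suffices to show $P(Q) \le P(\TRI ABC) - 1$.

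Assume without loss of generality that $D$ lies on segment $BA$ and $E$ on segment $BC$; this is legal because $|BA| \ge |BC| \ge 1$ as $\TRI ABC$ is thin. Then $Q$ is the quadrilateral $ADEC$, and its sides decompose as $|AD| = |AB|-1$, $|EC|=|BC|-1$, $|CA|=|CA|$, and the cutting edge $|DE|$. Summing,
\[
P(Q) \;=\; P(\TRI ABC) \;-\; 2 \;+\; |DE|,
\]
so the claim reduces to showing $|DE| \le 1$.

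To bound $|DE|$, I use that $\TRI ABC$ is thin with longest edge $AB$. The obtuse angle (of size at least $\tfrac{2\pi}{3}$) must lie opposite the longest edge, hence at $C$; in particular the other two angles are acute and each at most $\tfrac{\pi}{6}$. Thus $\angle DBE = \angle ABC \le \tfrac{\pi}{6}$. In the isosceles triangle $\TRI BDE$ with $|BD|=|BE|=1$, the base $|DE|$ is opposite the apex angle $\angle DBE \le \tfrac{\pi}{6} < \tfrac{\pi}{3}$, which is smaller than the two base angles (each exceeding $\tfrac{\pi}{3}$). Hence $|DE|$ is the shortest side, giving $|DE| \le 1$. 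Combining with the identity above yields $P(Q) \le P(\TRI ABC) - 1$, and the convex-monotonicity step completes the proof.

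The only step that requires any care is the convex perimeter monotonicity, which I would justify briefly by a standard argument: iteratively clip $Q$ by a supporting half-plane of the inner triangle at each of its three sides; each clipping preserves convexity and does not increase perimeter, and the final clipped region is the triangle itself. Everything else is routine trigonometry.
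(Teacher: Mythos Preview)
Your proof is correct and follows essentially the same approach as the paper: reduce to bounding $P(Q)$ via convex perimeter monotonicity, write $P(Q)=P(\TRI ABC)-2+|DE|$, and bound $|DE|\le 1$ using that the angle at $B$ in a thin triangle is at most $\tfrac{\pi}{6}$. The paper phrases it as a proof by contradiction rather than directly, but the ingredients are identical.
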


\begin{figure}
    \centering
    \includegraphics[width=0.75\textwidth]{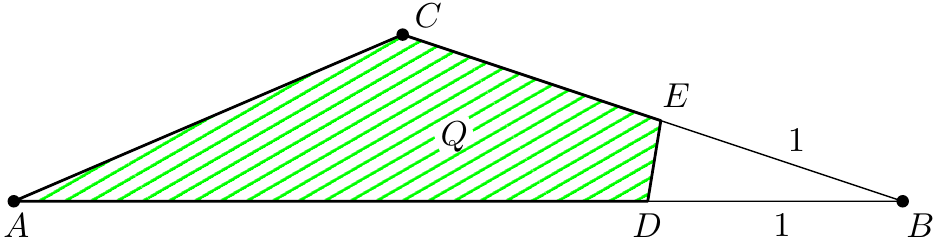}
    \caption{Cutting-off the triangle $\TRI BDE$.}%
    \label{fig:cutting}
\end{figure}
See Fig.~\ref{fig:cutting} for an example of cutting off an isosceles triangle.
\begin{proof}
Assume for a contradiction that some triangle $T$ has perimeter
$P(T)> P(\TRI ABC)-1$. Since $T$ and $Q$ are nested convex objects, we have that
that $P(Q) \ge P(T) > P(\TRI ABC)-1$. Then the length of the edge $DE$ is greater than $1$ and hence the angle $\angle DBE \ge \frac{\pi}{3}$, a contradiction with the property that the acute angles of a thin triangle are at most $\frac{\pi}{6}$.
\end{proof}

We now return to our construction and show that a separating triangle with a small area is guaranteed to contain a separating triangle of a significantly smaller perimeter. In the following two lemmas we distinguish two complementary cases, namely
whether the edge of level $i-1$ of a separating triangle of level $i$ is incident to its obtuse angle or not.

\begin{lemma}\label{lem:case_analysis_1}
Let $G_k$ have a planar straight-line drawing with edge lengths at least $1$ and let $\TRI UVW$ be a thin separating triangle of level $i$, where $i\le k-1$.
Assume that the edge $UV$ is of level $i-1$ and that it is incident to the obtuse angle of $\TRI UVW$.
Then $\TRI UVW$ contains a thin separating triangle $T$ of level $i+1$ whose perimeter satisfies $P(T) \leq P(\TRI UVW)-1$.
\end{lemma}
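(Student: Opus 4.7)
The plan is to combine Lemma~\ref{lem:separating-triangle} (to exhibit a separating triangle of level $i+1$ inside $\TRI UVW$) with Lemma~\ref{lem:cutting} (to certify the perimeter bound). Without loss of generality assume that the obtuse angle of $\TRI UVW$ sits at $U$, so that $V$ and $W$ are both acute (each of angle at most $\pi/6$) and the longest edge is $VW$. Let $W'$ be one of the at least two vertices of label $i$ that are simplicial to $UV$ and lie in the interior of $\TRI UVW$; the existence of such a $W'$ is guaranteed by the definition of a separating triangle of level $i$.

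To construct $T$, I would apply Lemma~\ref{lem:separating-triangle} to the edge $UW'$ (which has label $i$), obtaining a separating triangle $T=\TRI UW'X$ of level $i+1$. Because $W'$ lies strictly inside $\TRI UVW$, planarity forces every simplicial vertex of $UW'$ in $G_k$---in particular $X$ and the two further simplicial vertices trapped inside $T$---to lie inside $\TRI UVW$: otherwise a straight segment from $W'$ to such a vertex would have to cross a side of $\TRI UVW$. Hence $T\subseteq \TRI UVW$, so $A(T)\leq A(\TRI UVW)\leq \tfrac14$, and together with the edge-length hypothesis this shows $T$ is thin.

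For the perimeter bound, Lemma~\ref{lem:cutting} applies after cutting off an isosceles triangle $\TRI BDE$ with $|BD|=|BE|=1$ at one of the acute vertices $B\in\{V,W\}$, producing a polygon $Q$ in which every inscribed triangle has perimeter at most $P(\TRI UVW)-1$. It therefore suffices to show that for at least one choice $B\in\{V,W\}$ we have $T\subseteq Q$; since the cut region $\TRI BDE$ is contained in the closed unit disk around $B$, it is enough to verify that $T$ is disjoint from that disk.

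The main obstacle lies in this disjointness claim. The corner $U$ of $T$ is at distance $\geq 1$ from both $V$ and $W$ because $UV$ and $UW$ are edges of $G_k$, and $W'$ is at distance $\geq 1$ from $V$ because $VW'$ is an edge; however $|WW'|$ is not directly constrained since $WW'$ is not an edge of $G_k$, and the third vertex $X$, being adjacent only to $U$ and $W'$, may a priori come close to either $V$ or $W$. To handle this I would invoke Lemma~\ref{lem:angles} applied to the thin triangle $\TRI UVW$ (whose longest edge is $VW$) with $D=W'$ and with $D=X$: this forces one of the angles $\angle VUW'$, $\angle WUW'$ (and analogously for $X$) to be obtuse, locating $W'$ and $X$ on a specific side at $U$. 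Combined with the pigeonhole inside the proof of Lemma~\ref{lem:separating-triangle} (which tells us on which side of $UW'$ the nested separating triangle lies) and with the freedom to exchange $W'$ for the other inner simplicial vertex $W''$, the base edge $UW'$ for $VW'$, or the cut vertex $V$ for $W$, a short case analysis selects a configuration with $T\subseteq Q$, yielding $P(T)\leq P(Q)\leq P(\TRI UVW)-1$ as required.
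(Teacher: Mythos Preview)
Your sketch has the right ingredients but a genuine gap at the end: the promised ``short case analysis'' is never carried out, and in fact your particular choice of separating triangle does not always admit a valid cut. Concretely, take the obtuse angle at $U$ as you do, and suppose $W'$ lies close to $W$ (this is allowed: $W$ and $W'$ are both level-$i$ simplicials to $UV$ and are \emph{not} adjacent, so $|WW'|$ is unconstrained). Your separating triangle $T=\TRI UW'X$ on the edge $UW'$ then has a third corner $X$ adjacent only to $U$ and $W'$, so $|VX|$ is also unconstrained and $X$ may lie close to $V$. In that configuration $T$ meets the unit disk at $W$ (through $W'$) and the unit disk at $V$ (through $X$), so neither choice of $B\in\{V,W\}$ in Lemma~\ref{lem:cutting} works. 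Swapping to the other interior simplicial $W''$ does not help, since $W''$ may equally well be close to $W$; switching the base edge to $VW'$ forces $V$ to be a corner of $T$, leaving only the cut at $W$, which again fails because $|WW'|<1$. Your appeal to Lemma~\ref{lem:angles} only tells you which of $\angle VUW'$, $\angle WUW'$ (and likewise for $X$) is obtuse; it does not rule out the mixed case $\angle VUW'$ obtuse and $\angle WUX$ obtuse, which is exactly the bad scenario above. The pigeonhole in Lemma~\ref{lem:separating-triangle} also gives you no control over which side of $UW'$ the separating triangle lies.

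The paper's proof avoids this trap by a different organization. It fixes (up to the $U\leftrightarrow V$ symmetry) the obtuse vertex of $\TRI UVW$ and then branches on the location of the obtuse angle of the inner thin triangle $\TRI UVX$, where $X$ is the inner of the two interior simplicials and $Y$ the outer. In the case where that obtuse angle is at the same vertex as the obtuse angle of $\TRI UVW$, the separating triangle is taken on the edge to $Y$ (not $X$); this forces $T$ to lie in the angular sector on the far side of the ray to $X$, whence every point of $T$ makes an obtuse angle with the opposite acute vertex and is thus at distance $>1$ from it. In the remaining cases the separating triangle is taken on the edge to $X$, and a short sub-case analysis (on whether it lies inside $\TRI UVX$ and on where its own obtuse angle is) selects the correct acute vertex to cut. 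The paper even remarks, right after the proof, that without exploiting the second interior vertex $Y$ there is a configuration in which Lemma~\ref{lem:cutting} cannot be immediately applied---which is precisely the obstruction your outline runs into.
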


\begin{proof}
Let $X$ and $Y$ be the two vertices of level $i$ simplicial to the edge $UV$ inside the triangle $\TRI UVW$.
As the embedding of $G_k$ is non-crossing straight-line, we may assume without loss of generality that the vertex $X$ is inside $\TRI UVY$.

As all triangles in our further consideration are inside the thin triangle $\TRI UVW$, they have area at most $\frac14$. By the definition of thin triangle they are also thin, as otherwise we would get in $G_k$ an edge shorter than $1$, which violates the assumptions of the Lemma. We distinguish several cases depending on the position of the obtuse angle of the considered triangles, see Fig.~\ref{fig:case_analysis_1}.

\begin{figure}
    \centering
    \includegraphics[width=1.0\textwidth,page=1]{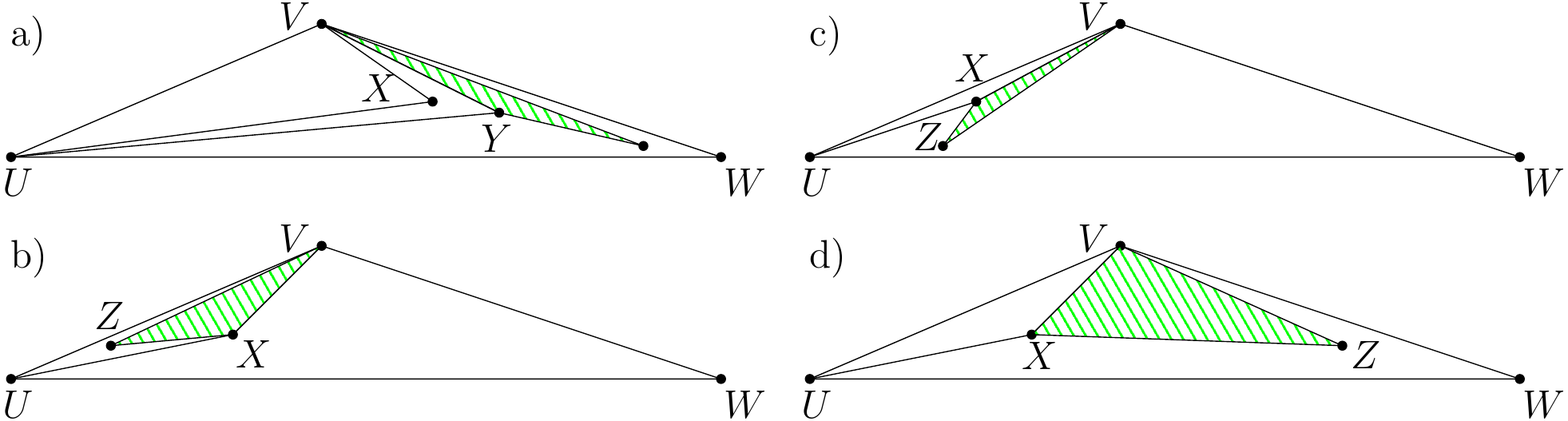}
    \caption{Case analysis for Lemma~\ref{lem:case_analysis_1}.}%
    \label{fig:case_analysis_1}
\end{figure}

\begin{itemize}
\item[a)] The obtuse angle of $\TRI UVX$ is at $V$.
By Lemma~\ref{lem:separating-triangle} we find a separating triangle $T$ incident with the edge $VY$.
Since $T$ takes place within the angle $\angle WVX$, it is at distance at least $1$ from $U$. Hence we may apply Lemma~\ref{lem:cutting} to cut away the isosceles triangle in the neighborhood of vertex $U$, to argue that the  perimeter of $T$ is at most $P(\TRI UVW)-1$.

\item[b)] The obtuse angle of $\TRI UVX$ is at $X$ and the separating triangle $\TRI VXZ$ incident with $VX$ obtained by by Lemma~\ref{lem:separating-triangle} is inside $\TRI UVX$. As $\TRI UXV$ is thin, we get that $\angle WVX \ge \frac{\pi}2$. Hence all points of $\TRI VXZ$ are at distance at least 1 from $W$.
We cut away the vertex $W$ and obtain $P(\TRI VXZ)\leq P(\TRI UVW)-1$.

\item[c)] The angle $\angle UXV$ is obtuse, the separating triangle $\TRI VXZ$ is outside $\TRI UVX$ and
the angle $\angle VXZ$ is obtuse. We apply Lemma~\ref{lem:angles} to get that $\angle WVZ$ is obtuse --- the case of $\angle UVZ$ being obtuse is excluded as this angle is composed from acute angles of two thin triangles: $\TRI UVX$ and $\TRI VXZ$. Then we cut away the vertex $W$ as in the previous case and obtain the claimed result.

\item[d)] The angle $\angle UXV$ is obtuse, the separating triangle $\TRI VXZ$ is outside $\TRI UVX$ and
the angle $\angle VXZ$ is acute. Now cut away the vertex $U$ (as $|UX|\ge 1$), and get $P(\TRI VXZ)\leq P(\TRI UVW)-1$.
\end{itemize}
\end{proof}

Note that \emph{only} when Case a) occurred, we used the existence of two vertices of label $i$ within the separating triangle $\TRI UVW$. If $Y$ was not present, we would have to discuss the case that the obtuse angle of $\TRI UVX$ is at $V$ and both separating triangles of level $i+1$ are inside $\TRI UVX$. For such a case it is possible to find a configuration where Lemma~\ref{lem:cutting} cannot be immediately applied, see Fig.~\ref{fig:problematic}.

\begin{figure}
    \centering
    \includegraphics[width=0.75\textwidth,page=2]{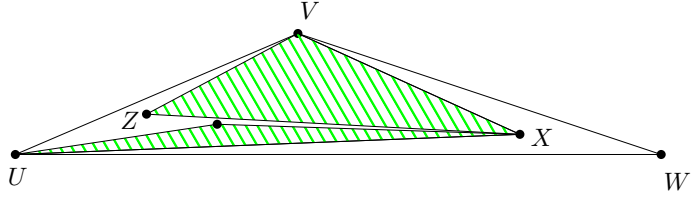}
    \caption{The case that avoids cutting. Note that $X$ could be arbitrary close to $W$ and $Z$ to $U$.}%
    \label{fig:problematic}
\end{figure}

\begin{lemma}\label{lem:case_analysis_2}
Let $G_k$ have a planar straight-line drawing with edge length at least $1$ and let $\TRI UVW$ be a thin separating triangle of level $i\le k-2$.
Assume that the edge $UV$ is of level $i-1$ and that it is \emph{not incident} to the obtuse angle of $\TRI UVW$.
Then $\TRI UVW$ contains a thin separating triangle $T$ of level at most $i+2$ whose perimeter satisfies $P(T) \leq P(\TRI UVW)-1$.
\end{lemma}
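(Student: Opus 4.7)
The strategy is to reduce to Lemma~\ref{lem:case_analysis_1} applied one level deeper. As in that proof, let $X$ and $Y$ be the two vertices of label $i$ simplicial to $UV$ inside $\TRI UVW$, with $X\in\TRI UVY$. Because the obtuse angle of the thin triangle $\TRI UVW$ is at $W$, both $\angle VUW$ and $\angle UVW$ are at most $\frac{\pi}{6}$ and $UV$ is the longest edge of $\TRI UVW$.

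The plan is to exhibit a thin separating triangle $T_1$ of level $i+1$ inside $\TRI UVW$ whose level-$i$ base edge is incident to the obtuse angle of $T_1$. Once such a $T_1$ is in hand, Lemma~\ref{lem:case_analysis_1} applied to $T_1$ (with $i$ replaced by $i+1$; legal because $i+1\le k-1$) produces inside $T_1$ a thin separating triangle $T$ of level $i+2$ with $P(T)\le P(T_1)-1$. Since $T_1\subseteq\TRI UVW$ and both triangles are convex, $P(T_1)\le P(\TRI UVW)$, and hence $P(T)\le P(\TRI UVW)-1$ as required.

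To construct the candidates for $T_1$, I consider the four level-$i$ edges $UX$, $VX$, $UY$, $VY$ issuing from $U$ or $V$ into the interior of $\TRI UVW$. Because $X$ and $Y$ lie strictly inside $\TRI UVW$, planarity confines all of their level-$(i+1)$ simplicial neighbours to $\TRI UVW$, so Lemma~\ref{lem:separating-triangle} applied to each of these four edges produces a thin level-$(i+1)$ separating triangle contained in $\TRI UVW$.

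The main obstacle is showing that at least one of these four candidates has its level-$i$ base edge incident to its obtuse angle. I plan to argue by contradiction: if every candidate had its obtuse angle at the new apex, each would be a narrow sliver along its base edge, with the apex close to that edge. A case analysis in the spirit of Lemma~\ref{lem:case_analysis_1}, combined with Lemma~\ref{lem:angles} applied to the tight wedges of opening at most $\frac{\pi}{6}$ at $U$ and at $V$ and with the nesting $X\in\TRI UVY$, should force such slivers to violate either planarity or the area bound of a thin triangle inside $\TRI UVW$. The availability of \emph{two} inner simplicial vertices $X$ and $Y$ (rather than one) is exactly what makes this counting argument close, as anticipated by the bad configuration in Fig.~\ref{fig:problematic} and the remark immediately following Lemma~\ref{lem:case_analysis_1}.
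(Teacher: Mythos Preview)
Your plan diverges from the paper's argument and, as written, contains a real gap. The heart of your reduction is the claim that among the four level-$(i+1)$ separating triangles built on $UX$, $VX$, $UY$, $VY$, at least one has its obtuse angle at an endpoint of its level-$i$ base. You do not prove this; you only announce that a case analysis ``should force such slivers to violate either planarity or the area bound''. Nothing concrete is carried out, and it is not evident the claim is even true: there is no visible obstruction to each apex $Z_j$ sitting near the midpoint of its base edge, putting every obtuse angle at the new vertex. Your reference to Fig.~\ref{fig:problematic} and the remark after Lemma~\ref{lem:case_analysis_1} is also misplaced --- that remark explains why \emph{that} lemma needs both interior vertices $X$ and $Y$; it says nothing about the present situation.

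The paper sidesteps this difficulty entirely and uses only \emph{one} interior vertex $X$ and \emph{one} edge $UX$. Build a single level-$(i+1)$ separating triangle $\TRI UXZ$ on $UX$ via Lemma~\ref{lem:separating-triangle}. If $\angle UXZ$ is obtuse, then the level-$i$ edge $UX$ is incident to the obtuse angle of $\TRI UXZ$ and Lemma~\ref{lem:case_analysis_1} applies to yield the desired $T$ at level $i+2$. If $\angle UXZ$ is acute, then one checks that $\TRI UXZ$ stays away from the unit corner of $\TRI UVW$ at $V$, so Lemma~\ref{lem:cutting} (cutting off $V$) already gives $P(\TRI UXZ)\le P(\TRI UVW)-1$; here $T=\TRI UXZ$ is of level $i+1$ and we are done without invoking Lemma~\ref{lem:case_analysis_1} at all. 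Thus the paper never needs to manufacture a candidate satisfying the hypothesis of Lemma~\ref{lem:case_analysis_1}; when that hypothesis fails, the cutting lemma delivers the perimeter drop directly.
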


\begin{proof}
Similarly to the previous lemma, let $X$ be one of the two vertices of level $i$ simplicial to the edge $UV$ inside the triangle $\TRI UVW$, see Fig.~\ref{fig:case_analysis_2}. By Lemma~\ref{lem:separating-triangle} we construct a separating triangle $\TRI UXZ$ incident with the edge $UX$.
\begin{figure}
    \centering
    \includegraphics[width=0.8\textwidth,page=5]{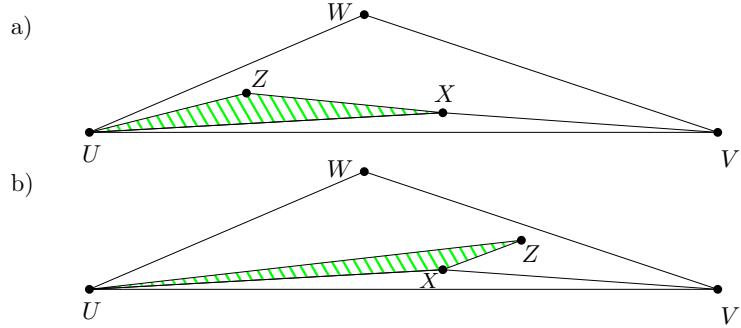}
    \caption{Case analysis for Lemma~\ref{lem:case_analysis_2}.}%
    \label{fig:case_analysis_2}
\end{figure}
\begin{itemize}
\item[a)] If the angle $\angle UXZ$ is acute, then we cut away $V$ and apply Lemma~\ref{lem:cutting} to obtain $P(\TRI UXZ) \leq P(\TRI UVW)-1$.
\item[b)] If the angle $\angle UXZ$ is obtuse, then we apply Lemma~\ref{lem:case_analysis_1} for the triangle $\TRI UXZ$ to find a suitable separating triangle $T$ of level $i+2$ within $\TRI UXZ$.
\end{itemize}\end{proof}

\begin{corollary}\label{cor:perimeter}
For any $r > 1$, $k\ge 1$, $l\ge 0$ and any planar straight-line drawing of $G_{k+l}$ with edge length at least $1$
it holds: If the drawing contains a thin separating triangle of level $k\ge 1$, then it has a triangle of perimeter at most $2r+\frac14-\lfloor \frac{l}2 \rfloor$.
\end{corollary}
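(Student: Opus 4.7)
The plan is to (i) establish a baseline perimeter bound on the given thin separating triangle $\TRI UVW$ of level $k$, and then (ii) iterate Lemma~\ref{lem:case_analysis_2} to shave one unit of perimeter per step while the level budget of $G_{k+l}$ allows.

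For (i), $\TRI UVW$ is thin, so its area is at most $\tfrac14$ and each of its edges has length in $[1,r]$. Placing the longest edge of length $c\le r$ on the $x$-axis, the opposite vertex lies at height $h\le 1/(2c)$. Applying $\sqrt{y^2+h^2}\le y+h^2/(2y)$ to both halves of the base, and using the constraint that each of the two shorter edges is at least $1$ (which forces both base-projections to be at least $\sqrt{1-h^2}$), one verifies that $a+b\le c+\tfrac14$; hence $P(\TRI UVW)\le 2r+\tfrac14$.

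For (ii), set $T_0:=\TRI UVW$ of level $i_0=k$, and apply Lemma~\ref{lem:case_analysis_2} iteratively: as long as $i_{m-1}\le (k+l)-2$, the lemma produces a thin separating triangle $T_m\subseteq T_{m-1}$ of level $i_m\le i_{m-1}+2$ with $P(T_m)\le P(T_{m-1})-1$. Since $i_m\le k+2m$, the condition $i_{m-1}\le k+l-2$ is met whenever $m\le \lfloor l/2\rfloor$, so exactly $\lfloor l/2\rfloor$ iterations can be performed. The resulting triangle $T$ then satisfies
\[
P(T)\le P(T_0)-\lfloor l/2\rfloor\le 2r+\tfrac14-\lfloor l/2\rfloor,
\]
as required.

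The main delicate point is the constant $\tfrac14$ in the baseline bound (i): one must combine thinness with the edge-length lower bound carefully, and in particular observe that a thin triangle with all edges $\ge 1$ necessarily has its longest side $c$ bounded below (specifically $c\ge\sqrt{2+\sqrt{3}}$), which is what keeps the correction term $h^2/(2x)+h^2/(2(c-x))$ below $\tfrac14$. Given this, part (ii) is routine bookkeeping on the growth of levels and the per-step perimeter drop supplied by Lemma~\ref{lem:case_analysis_2}.
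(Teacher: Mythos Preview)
Your baseline bound in (i) is handled more carefully than in the paper itself, which simply asserts that a thin triangle fits in an $r\times\tfrac18$ rectangle; your observation that the longest side satisfies $c\ge\sqrt{2+\sqrt{3}}$ (forced by the area bound together with $a,b\ge 1$) is exactly what is needed to control the correction term, and this part is fine.

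The gap is in (ii). Lemma~\ref{lem:case_analysis_2} carries an explicit hypothesis: the edge of level $i-1$ in the separating triangle must be \emph{not} incident to the obtuse angle, i.e.\ it must be the longest side. You iterate the lemma without ever verifying this, and there is no reason it should hold at every step; the output triangles produced in the proofs of Lemmas~\ref{lem:case_analysis_1} and~\ref{lem:case_analysis_2} can have either configuration. When the previous-level edge \emph{is} incident to the obtuse angle, Lemma~\ref{lem:case_analysis_2} is simply not applicable (its proof cuts at the vertex $V$, which relies on $V$ being one of the acute vertices; see Lemma~\ref{lem:cutting}). The paper's proof accordingly case-splits at each stage: if the hypothesis of Lemma~\ref{lem:case_analysis_2} fails, it invokes Lemma~\ref{lem:case_analysis_1} instead, gaining the unit perimeter drop in a single level, and then passes to any nested separating triangle of the next level to restore the two-level accounting. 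Either branch yields $P(\Delta_{2i})\le P(\Delta_{2i-2})-1$, after which your bookkeeping on $\lfloor l/2\rfloor$ goes through unchanged. So the fix is short, but as written your iteration step is unjustified.
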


\begin{proof}
Denote by $\Delta_0$ the thin triangle of level $k$ in the drawing of $G_{k+l}$.
Since all edges have length at most $r$, any thin triangle it could be drawn inside a rectangle $r\times \frac18$, hence it has perimeter at most $2r+\frac14$.

We involve Lemmas~\ref{lem:case_analysis_1} and \ref{lem:case_analysis_2}, to find in the drawing of $G_{k+l}$
a sequence of nested separating triangles of length at least $l+1$ with decreasing perimeters.

We argue that the sequence can be chosen such that
for any $i\in\{1,2,\dots,\lfloor \frac{l}2 \rfloor\}:
P(\Delta_{2i})\le P(\Delta_{2i-2})-1 \le P(\Delta_0)-i$.
We distinguish two cases whether the edge of level $2i-3$ in $\Delta_{2i-2}$ is incident to the obtuse angle of $\Delta_{2i-2}$ or not:
\begin{itemize}
\item
In the first case we apply Lemma~\ref{lem:case_analysis_1} to get $P(\Delta_{2i-1})\le P(\Delta_{2i-2})-1$. As $\Delta_{2i}$ is inside $\Delta_{2i-1}$, we get $P(\Delta_{2i})\le P(\Delta_{2i-2})-1$.
\item Otherwise we apply Lemma~\ref{lem:case_analysis_2} to derive $P(\Delta_{2i})\le P(\Delta_{2i-2})-1$ directly.
\end{itemize}
\end{proof}

Now we combine the two parts together to prove Theorem~\ref{thm:2trees-unbounded}.

\begin{proof}[of Theorem~\ref{thm:2trees-unbounded}]
For given $r$ we choose
$k = \lceil 2+ 2 \log_2 r \rceil$
and consider the graph $G_{k+4r}$. Assume for a contradiction that $G_{k+4r}$ allows a drawing of edge-length ratio at most $r$. Up to an appropriate scaling, we assume that the longest edge of such drawing has length $r$ and hence the shortest has length at least $1$.

In the drawing of the graph $G_{k+4r}$ consider a sequence of separating triangles $\Delta_1,\dots,\Delta_{k+4r}$ where $\Delta_1,\dots,\Delta_k$ are chosen as shown in Corollary~\ref{cor:area}.

By Corollary~\ref{cor:area}, the triangle $\Delta_k$ is thin, so we can extend the
sequence with $\Delta_k,\dots,\Delta_{k+4r}$ according to Corollary~\ref{cor:perimeter}.

By Corollary~\ref{cor:perimeter}, $P(\Delta_{k+4r})\le 2r+\frac14 -2r = \frac14$, a contradiction to the assumption that all triangles of $G_{k+4r}$ have sides of length at least one.
\end{proof}

Note that the graph $G_{k+4r}$ has $O^*\big((10^4)^r\big)$ vertices and edges, as in each iteration we add 10 edges of level $i$ per every edge of level $i-1$. The dependency between the edge-length ratio and the number of vertices could be rephrased as follows:

\begin{corollary}\label{cor:ratio-by-n}
The edge-length ratio over the class of $n$-vertex $2$-trees is $\Omega(\log n)$.
\end{corollary}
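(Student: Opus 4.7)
The plan is to invoke Theorem~\ref{thm:2trees-unbounded} for the concrete graph $G_{k+4r}$ used in its proof, and to bound its size by an easy induction on the construction. First, I would count the edges of $G_m$ by label: $G_0$ has three label-$0$ edges, and every label-$i$ edge in $G_i$ gives rise, via its five attached simplicial vertices, to exactly $10$ new label-$(i+1)$ edges in $G_{i+1}$ (two per new vertex). Hence the number of label-$i$ edges equals $3\cdot 10^i$, and summing over $i=0,1,\dots,m$ yields $|E(G_m)| = O(10^m)$; since each new vertex is introduced together with two new edges, $|V(G_m)| = O(10^m)$ as well.

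Next, with the choice $k = \lceil 2+2\log_2 r\rceil$ fixed in the proof of Theorem~\ref{thm:2trees-unbounded}, the $2$-tree $G_{k+4r}$ has edge-length ratio at least $r$, and its vertex count satisfies
\[
  n \;:=\; |V(G_{k+4r})| \;=\; O\!\left(10^{k+4r}\right) \;=\; r^{O(1)}\cdot 10^{4r},
\]
because $10^k = 10^{O(\log r)} = r^{O(1)}$. Taking logarithms yields $\log n = O(\log r) + O(r) = O(r)$, so that $r = \Omega(\log n)$, which is precisely the asserted lower bound on the edge-length ratio of some $n$-vertex $2$-tree.

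There is no substantial obstacle beyond the routine counting; the only care needed is to make sure that the contribution of the ``head'' $k = O(\log r)$ to the exponent of $n$ is absorbed, after taking logarithms, into the dominant linear term $4r$ coming from the remaining levels. This is automatic once the polynomial factor $10^k = r^{O(1)}$ is isolated, so the corollary follows directly from Theorem~\ref{thm:2trees-unbounded} and the size estimate.
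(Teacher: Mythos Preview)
Your proposal is correct and follows essentially the same approach as the paper: the paper simply notes before the corollary that $G_{k+4r}$ has $O^*\big((10^4)^r\big)$ vertices and edges since each iteration adds $10$ edges of level $i$ per edge of level $i-1$, which is exactly your counting argument spelled out in more detail.
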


We recall that Borrazzo and Frati prove that every partial $2$-tree with $n$ vertices
admits a planar straight-line drawing whose edge-length ratio is in $O(n^{0.695})$~\cite[Corollary~1]{BorrazzoF20}.

\section{Local edge-length ratio of $2$-trees}\label{se:local}

The aesthetic criterion studied in the previous section took into account any pair of edges. By our construction of nested triangles, it might happen that two edges attaining the maximum length ratio are far
in the graph distance (in the Euclidean distance they are close as the triangles are nested). This observation leads us to the question,
whether $2$-trees allow drawings where the length ratio of any two adjacent edges could be bounded by a constant. For this purpose we define the local variant of the edge-length ratio as follows:

The \emph{local edge-length ratio} of a planar straight-line drawing of a graph $G$ is the maximum ratio between the lengths of two adjacent edges (sharing a common vertex) of the drawing.

\begin{definition}\label{def:local}
The \emph{local edge-length ratio} $\rho_l(G)$ of a planar graph $G$ is the infimum local edge-length ratio taken over all planar straight-line drawings of $G$.

$$
\rho_l(G) = \inf_{\text{drawing of }G} \quad \max_{UV,VW\in E_G} \frac{|UV|}{|VW|}
$$
\end{definition}

Observe that the local edge-length ratio $\rho_l(G)$ is by definition bounded by the global edge-length ratio $\rho(G)$.
In particular, every outerplanar graph $G$ allows a drawing witnessing $\rho_l(G)\le 2$~\cite{LazardLL19}.
We extend this positive result to the class of all $2$-trees with a slightly increased bound on the ratio.

\begin{theorem}\label{thm:local}
 The local edge-length ratio of any $n$-vertex $2$-tree $G$ is $\rho_l(G)\le 4$. Also, for any arbitrarily small positive constant $\varepsilon$, a planar straight-line drawing of $G$ with local edge-length ratio at most $4 +\varepsilon$ can be computed in $O(n)$ time assuming the real RAM model of computation.
\end{theorem}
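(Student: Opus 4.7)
\textbf{Proof plan for Theorem~\ref{thm:local}.}
The plan is a constructive top-down recursion that follows the recursive definition of $G$. Pick an arbitrary root edge $e_0 = U_0V_0$ and draw it as a unit horizontal segment; every other vertex of $G$ is added during the recursive construction as a simplicial vertex of some edge. To each pair (edge $e=UV$, designated side of $e$) processed by the recursion I would attach as an inductive invariant a \emph{placement region} $R(e)$: a convex subset of the corresponding open half-plane, parameterized by $|UV|$ and by a constant $c(\varepsilon)$. The region is designed so that every point $P \in R(e)$ satisfies $|UP|,|VP|\in[|UV|/(4+\varepsilon),(4+\varepsilon)|UV|]$ and $|UP|/|VP|\in[1/(4+\varepsilon),4+\varepsilon]$. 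Thus any simplicial vertex of $UV$ that ends up inside $R(e)$ automatically discharges all ratio constraints involving the edge $UV$.

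The recursive step processes an edge $e=UV$ with region $R(e)$ and the sub-$2$-tree $H$ attached on the designated side. Let $W_1,W_2,\dots,W_k$ be the simplicial vertices added directly to $UV$ in $H$, enumerated so that the triangles $\TRI UVW_i$ are nested from outermost ($i=1$) to innermost ($i=k$). I would place the $W_i$'s inside $R(e)$ along the perpendicular bisector of $UV$ (or an analogous circular arc) at strictly decreasing heights, ensuring that: \emph{(a)} each $W_i$ lies inside $R(e)$ and inside $\TRI UVW_{i-1}$, with the convention $\TRI UVW_0 := R(e)$; \emph{(b)} the distances $|UW_1|,\dots,|UW_k|$ (and symmetrically at $V$) have pairwise ratio at most $4+\varepsilon$, which discharges the adjacent-edge constraints at $U$ and $V$; and \emph{(c)} for every new edge $UW_i$ the sub-region $R(UW_i)$, on the side where its further sub-$2$-tree lives, is a scaled copy of the template and fits without overlap either into the annular slab between $\TRI UVW_{i-1}$ and $\TRI UVW_i$ (for the outer side) or into the region $\TRI UVW_i \setminus \TRI UVW_{i+1}$ (for the inner side), and symmetrically for $VW_i$. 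The algorithm then recurses on each new edge with its sub-region. Each simplicial vertex is placed by an $O(1)$-time arithmetic computation from the data of its parent edge, giving the claimed $O(n)$ total running time on the real RAM model.

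The main obstacle is the geometric packing lemma underlying step~\emph{(c)}: regardless of the branching pattern of $H$, the annular slabs between consecutive nested triangles must simultaneously accommodate two scaled template regions plus all of their descendants. This forces a careful coupling between the opening angle of $R(e)$, the decay rate of the heights of the $W_i$'s, and the scaling applied to the sub-regions; in particular the opening angle must be chosen as a decreasing function of $\varepsilon$. The constant $4$ emerges from a two-step worst case: on one recursion step, the triangle inequality forces a newly placed simplicial vertex to sit at distance at least about $|UV|/2$ from each endpoint of its parent edge (contributing a factor $2$ downwards in edge length), and on the very next step a symmetric configuration on the opposite side of the resulting vertex contributes another factor of about $2$ upwards; the product $2\cdot 2 = 4$ is the bound that cannot be improved, while any slack $\varepsilon>0$ can be absorbed by tightening the template region.
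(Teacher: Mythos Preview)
Your proposal identifies the right obstacle but does not overcome it: the self-similar packing in step~(c) fails as soon as an edge $UV$ has many simplicial children. Since every child $W_i$ must satisfy $|UW_i|,|VW_i|\ge |UV|/(4+\varepsilon)$, each child edge has length $\Theta(|UV|)$, so each child template $R(UW_i)$ --- being a \emph{scaled copy} of a fixed shape --- has area $\Theta(|UV|^2)$. But all of these templates must sit disjointly inside the parent region, whose area is itself only $\Theta(|UV|^2)$. This forces the number of children to be $O(1)$, whereas a $2$-tree may attach arbitrarily many simplicial vertices to a single edge. Shrinking the opening angle of the template as a function of $\varepsilon$ does not help, because it shrinks parent and child templates by the same factor and leaves the area ratio unchanged; the parameter that breaks the packing is $k$, not $\varepsilon$.

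The paper's construction avoids self-similarity altogether. It first runs BFS from the root edge and groups each connected component of a BFS layer, together with its two attachment vertices $X,Y$ from the previous layer, into a \emph{tree-component} $H_{X,Y,Z}$, in which every vertex other than $X,Y$ is adjacent to $X$ or to $Y$. The key Lemma~\ref{lem:local} draws a single tree-component inside any prescribed convex region with local ratio at most $2+\delta$ and with $XY$ its longest edge, by ``folding'' the component so that the edges incident with $X$ (respectively $Y$) alternate between a short length in $(1,1+\delta)$ and a long length in $(2,2+\delta)$ according to parity of tree-distance from $Z$; this handles arbitrary branching without any scaled-template requirement. The global bound $4+\varepsilon$ then follows because adjacent edges lying in different tree-components can only meet at a root vertex $X$ or $Y$, and since $XY$ is the longest edge of every component it roots, the ratio across the interface is at most $(2+\delta)^2<4+\varepsilon$. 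This two-stage mechanism, rather than the triangle-inequality heuristic you sketch, is where the constant $4$ actually comes from.
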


The proof of Theorem~\ref{thm:local} is based on a construction that provides a straight-line drawing of local edge-length ratio $4+\varepsilon$ for any given $2$-tree $G$ and any $\varepsilon>0$.

We use a breadth first search (BFS) and
and decompose $V_G$ into layers based on the distance from the initial edge $e$ of the recursive definition of the $2$-tree. Each such layer $L_i=\{u: \dist(u,e)=i\}$ is a forest, see Fig.~\ref{fig:local_example} a).

\begin{figure}
    \centering
    \includegraphics[page=1]{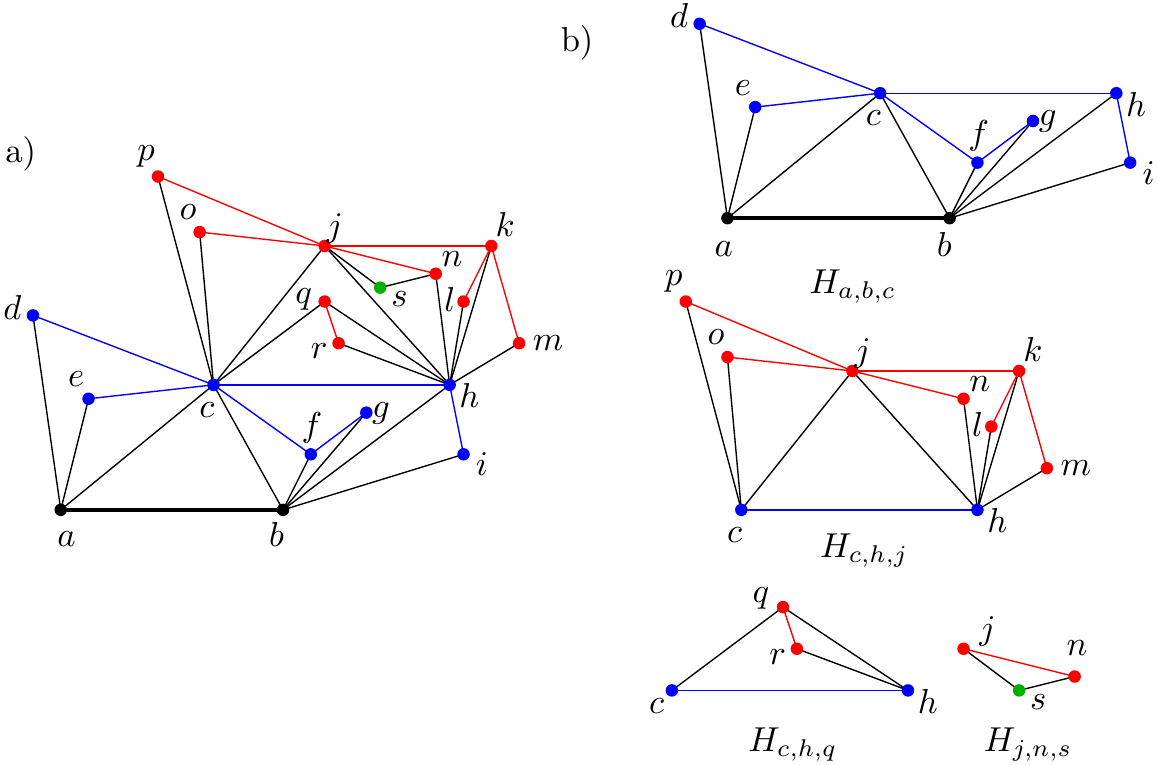}
    \caption{a) A decomposition of a $2$-tree $G$ into layers: black $L_0$ (the initial edge), blue $L_1$, red $L_2$, and green $L_3$;
    b) The tree components of $G$}%
    \label{fig:local_example}
\end{figure}

Moreover, for every
component $C$ of $L_i, i\ge 1$ we may due to the definition of a $2$-tree identify a unique vertex $w\in C$
and two its neighbors $u,v\in L_{i-1}$, as $w$ is the first vertex of $C$ inserted into $G$, and in the time of its insertion it was simplicial to the edge $uv$. We call the subgraph of $G$ induced by $C\cup\{u,v\}$ a \emph{tree-component} rooted in $u,v$ and denote it by $H_{u,v,w}$, see Fig.~\ref{fig:local_example} b). Observe that each tree-component of itself is a $2$-tree.
Moreover the vertices of $H_{u,v,w}$ distinct from $u,v$ and $w$ can be partitioned into two disjoint sets: those adjacent to $u$ and those adjacent to $v$.

Note that BFS can be executed in $O(n)$ time for a planar graph with $n$ vertices. This procedure can be extended
in a straightforward way to determine the tree-components in $O(n)$ time --- on each vertex we spend additional
constant time to identify the tree component it belongs.

For a line segment $AB$, let $\overline{AB} = AB\setminus\{A,B\}$ denote for the segment $AB$ without its endpoints.

\begin{definition}
Let $UV$ be an edge of a planar straight-line drawing of $G$ on at least three vertices.
The \emph{vacant region} for $UV$ is the intersection of all open half-planes determined by all pairs of vertices such that these half-planes contain $\overline{UV}$.
\end{definition}

For example, Fig.~\ref{fig:vacant} shows the vacant region for an edge
$UV$ in a planar straight-line drawing of a $2$-tree. Note that, by the definition, the vacant region for $UV$ is an open convex set with $U$ and $V$ on the boundary.
\begin{figure}
    \begin{center}
    \includegraphics{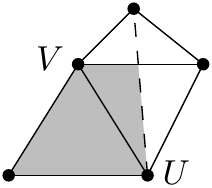}
    \caption{The filled gray region is the vacant region of the edge $UV$.}%
    \label{fig:vacant}
    \end{center}
\end{figure}

We proceed to the main technical step of our construction.

\begin{lemma}\label{lem:local}
Let $H_{X,Y,Z}$ be a tree-component of a $2$-tree $G$. For any $\delta>0$, any open convex set $S$ and any two points on the boundary of $S$, the graph $H_{X,Y,Z}$ can be drawn with the local edge-length ratio at most $2+\delta$ such that vertices $X,Y$ are placed on the chosen two points, the rest of the drawing of $H_{X,Y,Z}$ is inside $S$, and $XY$ is the longest edge of the drawing.
\end{lemma}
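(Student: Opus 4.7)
}
I would prove the lemma by induction on $n = |V(H_{X,Y,Z})|$. The key structural observation is that $H_{X,Y,Z}$ consists of the triangle $XYZ$ together with two trees $T_X$ and $T_Y$ meeting at $Z$, where every non-$Z$ vertex of $T_X$ (resp.\ $T_Y$) is adjacent to $X$ (resp.\ $Y$), and these $X$-side and $Y$-side vertices are disjoint.

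\emph{Base case $(n = 3)$.} The graph $H$ is just the triangle $XYZ$. Let $M$ be the midpoint of segment $XY$. Because $S$ is open and convex with $X, Y$ on its boundary, there are points of $S$ arbitrarily close to $M$ along the perpendicular bisector of $XY$ on the side of $S$. I would place $Z$ at such a point, sufficiently close to $M$ that $|XZ|$ and $|YZ|$ both lie in the interval $(|XY|/2,\, |XY|/2 + \eta)$ for a small $\eta > 0$. This simultaneously keeps $XY$ strictly the longest edge and forces $|XY|/|XZ|,\ |XY|/|YZ| < 2 + \delta$.

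\emph{Inductive step $(n \ge 4)$.} Since $n \ge 4$, at least one of $T_X, T_Y$ contains a leaf $w \ne Z$. Without loss of generality $w$ is a leaf of $T_X$ with unique tree-neighbor $p$ (possibly $p = Z$). In $H$ the vertex $w$ has degree exactly $2$, with neighbors $X$ and $p$, hence $w$ is simplicial to the edge $Xp$. The graph $H' = H \setminus \{w\}$ is again a tree-component of the same form with the same anchors $X, Y, Z$, so by the inductive hypothesis we obtain a drawing of $H'$ inside $S$ with the required properties. I would then place $w$ inside the vacant region of $Xp$, on the side opposite $Y$, at a location making the triangle $Xpw$ ``almost isoceles'' with $|Xw|$ close to $|Xp|$ and $|pw|$ comparable to $|Xp|$ as well. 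This placement, if feasible, simultaneously ensures $|Xw| < |XY|$, preserves planarity, and keeps the local edge-length ratios at $X$, $p$, and $w$ below $2 + \delta$.

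\emph{Main obstacle.} The central difficulty is guaranteeing that the vacant region of $Xp$ always has enough room to admit such a placement of $w$. To make the induction go through, I would strengthen the inductive hypothesis with a geometric invariant on the shape of the vacant region along every candidate insertion-anchor edge --- for instance, that some definite sector of prescribed angular aperture, straddling the perpendicular bisector of the edge and of radius comparable to the edge's length, lies inside the vacant region. The strict slack $\delta > 0$ would provide the tolerance needed for this invariant to be maintained after each insertion, since splitting the sector at $Xp$ into the two new sectors at $Xw$ and $pw$ costs only a controlled amount of room. Tuning these sector parameters precisely in terms of $\delta$ and verifying that the invariant propagates through every inductive step is the most delicate part of the argument.
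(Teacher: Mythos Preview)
Your inductive framework and the identification of the tree structure $T_X,T_Y$ are sound, and the base case is fine. The gap is in the inductive step: the ``almost isosceles'' placement you propose for the new leaf $w$ cannot be iterated along an arbitrarily long path in $T_X$.

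Concretely, take the simplest non-trivial case where $T_X$ is a path $Z=v_0,v_1,\dots,v_k$ and every $v_i$ is adjacent to $X$. All the edges $Xv_i$ share the vertex $X$, so they are pairwise adjacent; hence every $|Xv_i|$ must lie in an interval $[L,(2+\delta)L]$ for some $L$. Your rule forces $|Xv_{i+1}|$ close to $|Xv_i|$, so in fact all $|Xv_i|$ are essentially equal. But then the lower bound $|v_iv_{i+1}|\ge |Xv_i|/(2+\delta)$ forces each angle $\angle v_iXv_{i+1}$ to be at least some fixed $c(\delta)>0$. Now you are stuck: if successive triangles $\TRI Xv_iv_{i+1}$ open outward you exhaust the total angle at $X$ after $O(1/c(\delta))$ steps; if instead they nest (each new triangle inside the previous one, as your ``place $w$ in the vacant region of $Xp$'' suggests), then with all $v_i$ on essentially the same circle about $X$ the angles $\angle v_iXv_{i+1}$ must shrink geometrically, driving $|v_iv_{i+1}|\to 0$ and breaking the ratio bound. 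Either way the construction fails once $k$ is large. Your proposed sector invariant does not escape this: splitting a sector of fixed aperture into two sectors for $Xw$ and $pw$ genuinely halves the available aperture, and no choice of parameters depending only on $\delta$ can survive unbounded depth.

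The paper avoids this obstruction by \emph{not} keeping $|Xw|$ close to $|Xp|$. Instead it alternates: edges from $X$ to vertices at odd tree-distance from $Z$ have length close to $2$, those at even distance have length close to $1$ (after scaling $|XY|=2+\delta$). The path $v_0,v_1,\dots$ then folds back and forth radially, like the ribs of an umbrella, with $|v_iv_{i+1}|\approx 1$ obtained as the \emph{difference} of consecutive radii rather than as a chord at constant radius. This lets all of $v_0,\dots,v_k$ sit inside an arbitrarily thin angular cone at $X$, so the construction accommodates paths (and trees) of any length. The alternation between ``short'' and ``long'' edges is the missing idea in your proposal.
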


\begin{figure}
    \begin{center}
    \includegraphics{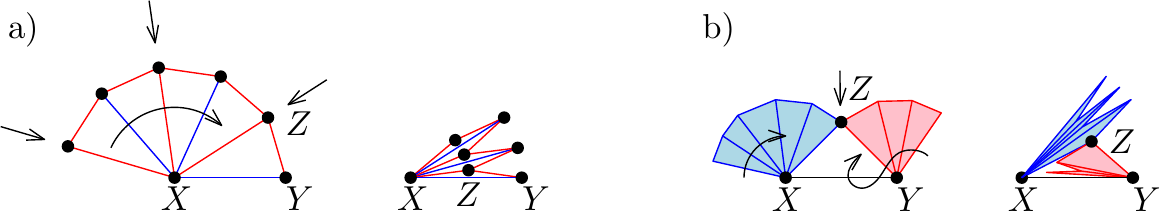}
    \caption{Folding a path in the tree component $H_{X,Y,Z}$. The arrows indicate the vertex movement.}%
    \label{fig:folding}
    \end{center}
\end{figure}

\ifarxiv
We provide a very brief idea of the construction used in the proof of Lemma~\ref{lem:local}.
The full proof is present in the appendix.
\else
Instead of the proof of Lemma~\ref{lem:local}, that is omitted due to space restrictions, we provide a very brief idea of the construction.
\fi
Observe that in the case that in the case when the tree component $H_{X,Y,Z}$ is a fan centered at $X$, then it can be folded like an umbrella into the vacant region of $XY$ as depicted in Fig.~\ref{fig:folding} a). In the folded drawing the red edges have the same length upto an additive factor $\delta$, while the blue are twice longer (again upto $+\delta$).

Analogously, if the vertices adjacent to $X$ in $H_{X,Y,Z}$ induce a path and the same for the neighbors of $Y$, then these two paths can be folded from both sides of $XY$ inside its vacant region, see Fig.~\ref{fig:folding} b). By much more technically involved argument it can be shown that the whole branch of a tree can be folded into the area near the first edge of the branch.

\begin{proof}[of Theorem~\ref{thm:local}]
When $G=K_2$, then it has $\rho_l(K_2)=1$, by Definition~\ref{def:local} (note that $U$ and $W$ need not to be distinct.) Otherwise we proceed by induction on the number of tree components of $G$.

For any $\varepsilon\in(0,1)$, let $\delta = \frac{\varepsilon}3$.
The induction hypothesis we aim to prove is:
\begin{claim}
Any $2$-tree $G$ allows a drawing with local edge-length ratio at most $4+\varepsilon$, where each tree component $H_{X,Y,Z}$ is drawn with local edge length ratio at most $2+\delta$ and $XY$ is the longest edge of the drawing of $H_{X,Y,Z}$.
\end{claim}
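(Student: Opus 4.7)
My plan is to prove the claim by induction on the number $m$ of tree components of $G$. The base case $m=0$ gives $G=K_2$, which is handled separately as $\rho_l(K_2)=1$ and satisfies the claim trivially.

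For the inductive step with $m\ge 1$, I remove a peripheral tree component $H=H_{X,Y,Z}$: one whose non-root vertex set $C$ has no neighbor in $V(G)\setminus(C\cup\{X,Y\})$, equivalently, $C$ lies in a deepest BFS layer. Then $G'=G\setminus C$ is a $2$-tree with $m-1$ tree components (the vertices of $C$ are simplicial in $G$ and can be removed in reverse $2$-tree construction order), and the inductive hypothesis furnishes a drawing of $G'$ satisfying both parts of the claim.

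To extend the drawing to $G$, I keep $X,Y$ at their positions from the drawing of $G'$ and compute the vacant region $S$ of the edge $XY$ there; $S$ is an open convex set with $X,Y$ on its boundary and disjoint from the rest of the drawing. Invoking Lemma~\ref{lem:local} with parameter $\delta$, the set $S$, and the fixed positions of $X,Y$, I draw $H_{X,Y,Z}$ so that its remaining vertices lie inside $S$, the local ratio within $H$ is at most $2+\delta$, and $XY$ is the longest edge of $H$. Planarity is preserved by convexity and disjointness of $S$, and the per-tree-component part of the claim holds for $H$ by Lemma~\ref{lem:local} and for every other tree component by induction.

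It remains to check the global bound $4+\varepsilon$. At vertices in $C$ all incident edges lie in $H$, giving ratio at most $2+\delta$; at vertices in $V(G')\setminus\{X,Y\}$ no new edge appears, so the inductive bound transfers. The delicate case is $v\in\{X,Y\}$, which I plan to handle by using $XY$ as a pivot. Since $XY$ is longest in $H$, every new edge at $X$ has length in $[|XY|/(2+\delta),|XY|]$. In the tree component $T$ of $G'$ containing $XY$, the edges of $T$ at $X$ lie within ratio $2+\delta$ of $|XY|$, and the root edges $XX^j$ of child tree components of $X$ in $G'$ lie in $T$ and are therefore within this range; the non-root edges of those child components are then within an additional factor $2+\delta$ of their respective root edges. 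Crucially, applying the ``root edge is longest'' property of each tree component to these bridging chains, rather than naively composing three local ratios, tightens a potential $(2+\delta)^3$ bound to $(2+\delta)^2$, which is at most $4+\varepsilon$ by the chosen $\delta$; making this bridging argument precise and dispatching the case distinctions on the position of $XY$ within $T$ is the main obstacle.
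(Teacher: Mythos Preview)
Your proposal is correct and follows essentially the same inductive approach as the paper: peel off a deepest tree component, draw the remainder by induction, reinsert it via Lemma~\ref{lem:local} into the vacant region of $XY$, and then use the ``root edge is longest'' property to collapse the three-link chain at $X$ (and at $Y$) to a $(2+\delta)^2$ bound. The ``case distinctions on the position of $XY$ within $T$'' you anticipate do not actually arise --- the paper dispatches the bridging step uniformly in one sentence, since every edge at $X$ lies either in $T$ or in a child component whose root edge is itself an edge of $T$ incident to $X$.
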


For the base of the induction $G$ consists of a single tree component $H_{X,Y,Z}$, where
$XY$ is the initial edge of construction of $G$ as a $2$-tree.
We choose any open convex set $S$ and two points $X$, $Y$ on its boundary and apply
Lemma~\ref{lem:local}.

For the induction step assume that $H_{X,Y,Z}$ is a tree component of $G$, where $Z$ belongs to the highest possible level.
The graph $G'=G\setminus (H_{X,Y,Z}\setminus \{X,Y\})$ (i.e. when we remove from $G$ the component of $L_i$ containing the vertex $Z$) is a $2$-tree, since we may create $G'$ as a $2$-tree by the same order of insertions as is used for $G$, only restricted to the vertices of $G'$.
By induction hypothesis $G'$ allows a drawing with local edge-length ratio at most $4+\varepsilon$.

In this drawing we identify the vacant region $S$ for $XY$ and involve Lemma~\ref{lem:local} to extend the drawing
of $G'$ to the entire $G$.
The only vertices common to $G'$ and $H_{X,Y,Z}$ are $X$ and $Y$, hence we shall argue that edges incident with $X$ or $Y$ have edge-length ratio at most $4+\varepsilon$, as inside $H_{X,Y,Z}$ the ratio is at most $2+\delta < 4+\varepsilon$ by Lemma~\ref{lem:local}.

By the construction of the $2$-tree, the edge $XY$ may belong to several tree components rooted in $X$, $Y$, where it is the longest edge, but only to a single tree-component rotted in the vertices of the preceding level.
Consequently, the edge-length ratio of any two edges incident with $X$ or with $Y$ is at most $(2+\delta)^2=4+2\delta+\delta^2<4+3\delta =4+\varepsilon$.

Finally, we remark that computing the coordinates of the vertices can be executed in constant time per vertex, assuming a real RAM model of computation. It follows that the drawing of $G$ can be computed in $O(n)$ time.
\end{proof}

Since any graph of treewidth at most $2$, in particular all series-parallel graphs, can be augmented to a $2$-tree, Theorem~\ref{thm:local} directly implies the following.

\begin{corollary}\label{co:local}
For any graph $G$ of treewidth at most $2$, it holds that $\rho_l(G)\le 4$.
\end{corollary}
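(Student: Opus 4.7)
The plan is to reduce Corollary~\ref{co:local} to Theorem~\ref{thm:local} via the standard structural fact that the class of graphs of treewidth at most $2$ coincides with the class of partial $2$-trees; equivalently, every such graph $G$ is a spanning subgraph of some $2$-tree $G^*$ obtained by iteratively adding edges. This augmentation step is well known and can be made explicit: starting from a tree decomposition of width $2$, every non-clique bag gives rise to at most one missing edge whose addition preserves the tree decomposition and the width, and the process terminates with a chordal graph in which every maximal clique has size exactly $3$, that is, a $2$-tree.

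The second step is to invoke Theorem~\ref{thm:local} on $G^*$. For any $\varepsilon>0$, this produces a planar straight-line drawing $\Gamma^*$ of $G^*$ with local edge-length ratio at most $4+\varepsilon$. Now I restrict $\Gamma^*$ to the edge set of $G$ by simply deleting the segments of $E(G^*)\setminus E(G)$, keeping the vertex positions. The resulting drawing $\Gamma$ is a planar straight-line drawing of $G$: planarity is preserved under edge deletion, and the vertices remain in the same positions.

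The final observation is that restricting to a subgraph can only decrease the local edge-length ratio. Indeed, if $UV$ and $VW$ are adjacent edges of $G$, then they are also adjacent edges of $G^*$, so the ratio $|UV|/|VW|$ already appears in the maximum defining the local edge-length ratio of $\Gamma^*$. Therefore the local edge-length ratio of $\Gamma$ is at most that of $\Gamma^*$, hence at most $4+\varepsilon$. Taking the infimum over $\varepsilon>0$ yields $\rho_l(G)\le 4$.

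I do not expect any serious obstacle. The only small point worth stating carefully is the equivalence between treewidth at most $2$ and being a partial $2$-tree, which the corollary already invokes verbally; everything else is a one-line monotonicity argument combined with the previous theorem.
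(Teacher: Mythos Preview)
Your proposal is correct and matches the paper's own argument essentially verbatim: the paper simply observes that any graph of treewidth at most $2$ can be augmented to a $2$-tree and then invokes Theorem~\ref{thm:local}. You have additionally spelled out the monotonicity of the local edge-length ratio under edge deletion and the passage to the infimum over $\varepsilon$, which the paper leaves implicit, but the route is the same.
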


\section{Conclusions and Open Problems}\label{se:open}

This paper studied the edge-length ratio of planar straight-line drawing of partial $2$-trees.
It proved an $\Omega(\log n)$ lower bound on such edge-length ratio and it proved that every partial $2$-tree admits a planar straight-line drawing such that the local edge-length ratio is at most $4 + \varepsilon$ for any arbitrarily small positive $\varepsilon$.
Several questions are naturally related with our results.
We conclude the paper by listing some of those that, in our opinion, are among the most interesting ones.

\begin{enumerate}

\item Corollary~\ref{cor:ratio-by-n} of this paper gives a logarithmic lower bound while Corollary~1 of~\cite{BorrazzoF20} gives a sub-linear upper bound on the edge-length ratio of planar straight-line drawings of partial $2$-trees. We find it interesting to close the gap between the upper and lower bound.

\item Theorem~\ref{thm:local} gives an upper bound of $4$ on the local edge-length ratio of partial $2$-trees. It would be interesting to establish whether such an upper bound is tight. Also, studying the local edge-length ratio of other families of planar graphs is an interesting topic.

\item The construction in Theorem~\ref{thm:local} creates drawings where the majority of angles are very close to $0$ or $\pi$ radians. Hence, it would make sense to study the interplay between (local or global) edge-length ratio and angular resolution in planar straight-line drawings.


\end{enumerate}

\section*{Acknowledgement}
We thank all three reviewers for their positive comments and careful review, which helped improve our contribution.


\bibliographystyle{splncs04}
\bibliography{GDsubmission}


\clearpage

\ifarxiv

\appendix
\makeatletter
\noindent
\rlap{\color[rgb]{0.51,0.50,0.52}\vrule\@width\textwidth\@height1\p@}%
\hspace*{7mm}\fboxsep1.5mm\colorbox[rgb]{1,1,1}{\raisebox{-0.4ex}{\large\selectfont\sffamily\bfseries Appendix}}%
\makeatother

\section{Proof of Lemma~\ref{lem:local}}

\begin{proof}Up to an appropriate scaling, we may assume
that $|XY|=2+\delta$ and hence the goal is to design a drawing where
all edges have length at least 1 and less than $2+\delta$.

Recall that $H_{X,Y,Z}$ consists of $X,Y$ and a tree $T$ containing $Z$. We draw $H_{X,Y,Z}$, such that
the length of any edge $XU$ or $YU$ will belong to the interval $(2,2+\delta)$ if and only if the distance between $U$ and $Z$ in $T$ is odd.
We call such edges/segments \emph{long}. All other edges will be drawn to have length from the interval $(1,1+\delta)$. We call these \emph{short}.

We distinguish several cases. If $H_{X,Y,Z}\simeq K_3$, it suffices to draw $Z$ inside $S$ such that both $XZ$ and $YZ$ are short. A suitable position for such $Z$ close enough to the center of $XY$ always exists.

Otherwise assume without loss of generality that vertices $Y,Z$ have a common neighbor $X'$ in $H_{X,Y,Z}$. We first find a position for $X'$ in $S$ sufficiently close to $X$, such that $X'Y$ is long. Then we determine the position of $Z$ inside the triangle $XYX'$, such that all edges incident at this moment with $Z$ are short, see Fig.~\ref{fig:initriangle}.

\begin{figure}
    \centering
    \includegraphics[width=0.6\textwidth,page=1]{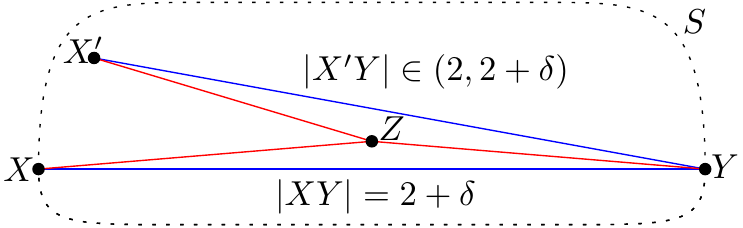}
    \caption{The initial position of points $X,Y,Z$ and $X'$. The boundary of $S$ is indicated by dots. All red edges are short, i.e. of length in $(1,1+\delta)$. The long edges are blue.}%
    \label{fig:initriangle}
\end{figure}

From now on, all vertices adjacent to $X$ will be drawn inside $\TRI XYZ$, while neighbors of $Y$ inside $\TRI YX'Z$.
We show how to draw the neighborhood of $X$; for the neighbors of $Y$ the argument is almost identical, only the $X'$ will take the role of $X$ in this case.

We process the neighbors of $X$ by the order of increasing distance from $Z$ in the tree $T$.
Let $U$ be a neighbor of $X$ and let $V,W$ be the first two vertices along the path from $U$ to $Z$ in $T$. (If $V=Z$, we choose $Y$ for $W$.) By our assumption on the order in which the vertices are processed, we know that we have already drawn all common neighbors of $X$ and $W$ but, possibly, only some of the neighbors of $V$.

\begin{itemize}
\item[a)] If the edge $XV$ is short then determine the already drawn vertex $U'\in\TRI XVW, U'\ne X,V$ such that
$\angle VXU'$ is minimal.
If $U'\in N(V)$ then the triangle $\TRI XVW$ contains no other vertex drawn so far and we may put $U$ near $U'$ so that $XU$ is long (as well as $XU'$) and $VU$ short (like $VU'$). Note that this case also covers the situation when $U'=W$.

When $U'\notin N(V)$, we shall first exclude the case $U'\in N(V')$ for some $V'\in N(W)$. Since $U'$ would be drawn inside $\TRI XWV'$ by our method, we would get $\angle VXU' > \angle VXV'$, a contradiction with the choice of $U'$.

Hence it means that $U'\in N(W)$. By the choice of the position of $U'$ as described in the Case~(b) just below, it is also possible to draw $\TRI XVU$ such that $XU$ is long and $VU$ short.

\item[b)] If the edge $XV$ is long, then first determine an auxiliary point $A$ that is in the vacant region for $XV$ such that $|XA|>2$ and $|AV'|>1$ for all $V'\in N(V)$. We then draw $U$ on the segment $XA$ such that both edges $XU$ and $UV$ are short.

    Note first that it is the position of the point $A$ inside the vacant region that allows us to find a position of $U$ for the Case~(a) above, i.e., at least 2 units far from $X$ and at least 1 unit far from $V$ (in the Case~(b) denoted as $V'$), see Fig.~\ref{fig:initriangle2}.

Also note that the concept of vacant regions forces the drawing of common neighbors of $X$ and $V$ to be performed in a nested way, i.e., only the first drawn neighbor of $X$ and $V$ will affect the position of $U'$ as the further such neighbors will be drawn inside $\TRI XVU$.

\begin{figure}
\begin{center}
    \includegraphics[width=0.9\textwidth,page=4]{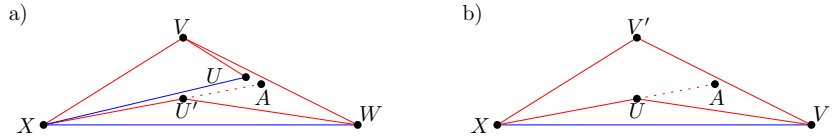}%
    \caption{Finding the position of $U$ in the case~b), accompanied by the case~a). Note that the same vertices are denoted differently in distinct iterations of the process.}%
    \label{fig:initriangle2}
\end{center}
\end{figure}
\end{itemize}
\end{proof}

\fi

\end{document}